\newif\ifabstract
\newif\iffull
\makeatletter \hypersetup{pdftitle={\@title}}}
 \gdef\xxxmark{%
   \expandafter\ifx\csname @mpargs\endcsname\relax 
     \expandafter\ifx\csname @captype\endcsname\relax 
       \marginpar{xxx}
     \else
       xxx 
     \fi
   \else
     xxx 
   \fi}
 \gdef\xxx{\@ifnextchar[\xxx@lab\xxx@nolab}
 \long\gdef\xxx@lab[#1]#2{\textbf{[\xxxmark #2 ---{\sc #1}]}}
 \long\gdef\xxx@nolab#1{\textbf{[\xxxmark #1]}}
 \long\gdef\xxx@lab[#1]#2{}\long\gdef\xxx@nolab#1{}%
\gdef\fps@figure{!htbp}}
\let\realbfseries=\bfseries
\def\bfseries{\realbfseries\boldmath}
\newtheorem{theorem}{Theorem}
\newtheorem{lemma}[theorem]{Lemma}
\newtheorem{corollary}[theorem]{Corollary}
\let\epsilon=\varepsilon
\def\defn#1{\textbf{\textit{\boldmath #1}}}
\definecolor{blue}{rgb}{0.2,0.32,0.97}
\newcommand{\ourAcknowledgments}[0]{
  We thank Joseph O'Rourke and the anonymous referees for helpful suggestions.
  This work grew out of an open problem session and a final project from
  the MIT class on Geometric Folding Algorithms: Linkages, Origami, Polyhedra
  (6.849) held Fall 2020.
}
\title{Flat Folding an Unassigned Single-Vertex Complex
  \\
  (Combinatorially Embedded Planar Graph with Specified Edge Lengths) without Flat Angles}
\author{
  Lily Chung\thanks{Massachusetts Institute of Technology, Cambridge, MA, USA}
  \and
  Erik D. Demaine\footnotemark[1]
  \and
  Dylan Hendrickson\footnotemark[1]
  \and
  Victor Luo\footnotemark[1]
}
\date{}
\author{Lily Chung}{Massachusetts Institute of Technology, Cambridge, USA}{lkdc@mit.edu}{https://orcid.org/0000-0001-7056-6155}{}
\author{Erik D. Demaine}{Massachusetts Institute of Technology, Cambridge, USA}{edemaine@mit.edu}{https://orcid.org/0000-0003-3803-5703}{}
\author{Dylan Hendrickson}{Massachusetts Institute of Technology, Cambridge, USA}{dylanhen@mit.edu}{https://orcid.org/0000-0002-9967-8799}{}
\author{Victor Luo}{Massachusetts Institute of Technology, Cambridge, USA}{vluo@mit.edu}{}{}
\authorrunning{L. Chung, E.\,D. Demaine, D. Hendrickson, V. Luo} 
\titlerunning{Flat Folding an Unassigned Single-Vertex Complex without Flat Angles}
\keywords{Graph drawing, folding, origami, polyhedral complex, algorithms} 
\def\paragraph{\subparagraph*}
\begin{document}
\maketitle

\begin{abstract}
  A foundational result in origami mathematics is
  Kawasaki and Justin's simple, efficient characterization
  of flat foldability for unassigned single-vertex crease patterns
  (where each crease can fold mountain or valley) on flat material.
  This result was later generalized to cones of material,
  where the angles glued at the single vertex may not sum to~$360^\circ$.
  Here we generalize these results to when the material forms a \emph{complex}
  (instead of a manifold), and thus the angles are glued at the single vertex
  in the structure of an arbitrary planar graph (instead of a cycle).
  Like the earlier characterizations,
  we require all creases to fold mountain or valley, not remain unfolded flat;
  otherwise, the problem is known to be NP-complete
  (weakly for flat material and strongly for complexes).
  Equivalently, we efficiently characterize which combinatorially embedded planar graphs with
  prescribed edge lengths can fold flat, when all angles must be mountain or
  valley (not unfolded flat).
  Our algorithm runs in $O(n \log^3 n)$ time, improving on the previous
  best algorithm of $O(n^2 \log n)$.
\end{abstract}

\section{Introduction}

The graph flat folding problem asks whether a given combinatorially embedded
planar graph with prescribed edge lengths can be ``folded flat'' onto a line.
More precisely,
a \defn{flat folding} is an assignment of $x$ coordinates to vertices
that respects the edge lengths, together with a partial order on the edges
(which defines the stacking order among edges with overlapping $x$ extents)
that respects the combinatorial planar embedding and avoids crossings
(edges penetrating connections between higher and lower edge endpoints,
and improperly nested edge endpoint connections)
\cite{foldeq,Connelly-Demaine-Rote-2002-infinitesimally-locked,gfalop}.%
\footnote{In \cite{foldeq}, flat foldings are called ``linear folded states''.
  Here we use ``flat foldings'' so that they match up with the corresponding
  notions in computational origami.}
Equivalently, a flat folding is a sequence or continuum of planar embeddings
that respect the combinatorial planar embedding, avoid crossings, and
converge to the correct edge lengths and to lying on a line
\cite{Abbott-Demaine-Gassend-2009,jocg}.  Figure~\ref{fig:invalid} shows an example of a flat folding, as well as some non-examples.

\begin{figure}
  \centering
  \includegraphics[width=\linewidth]{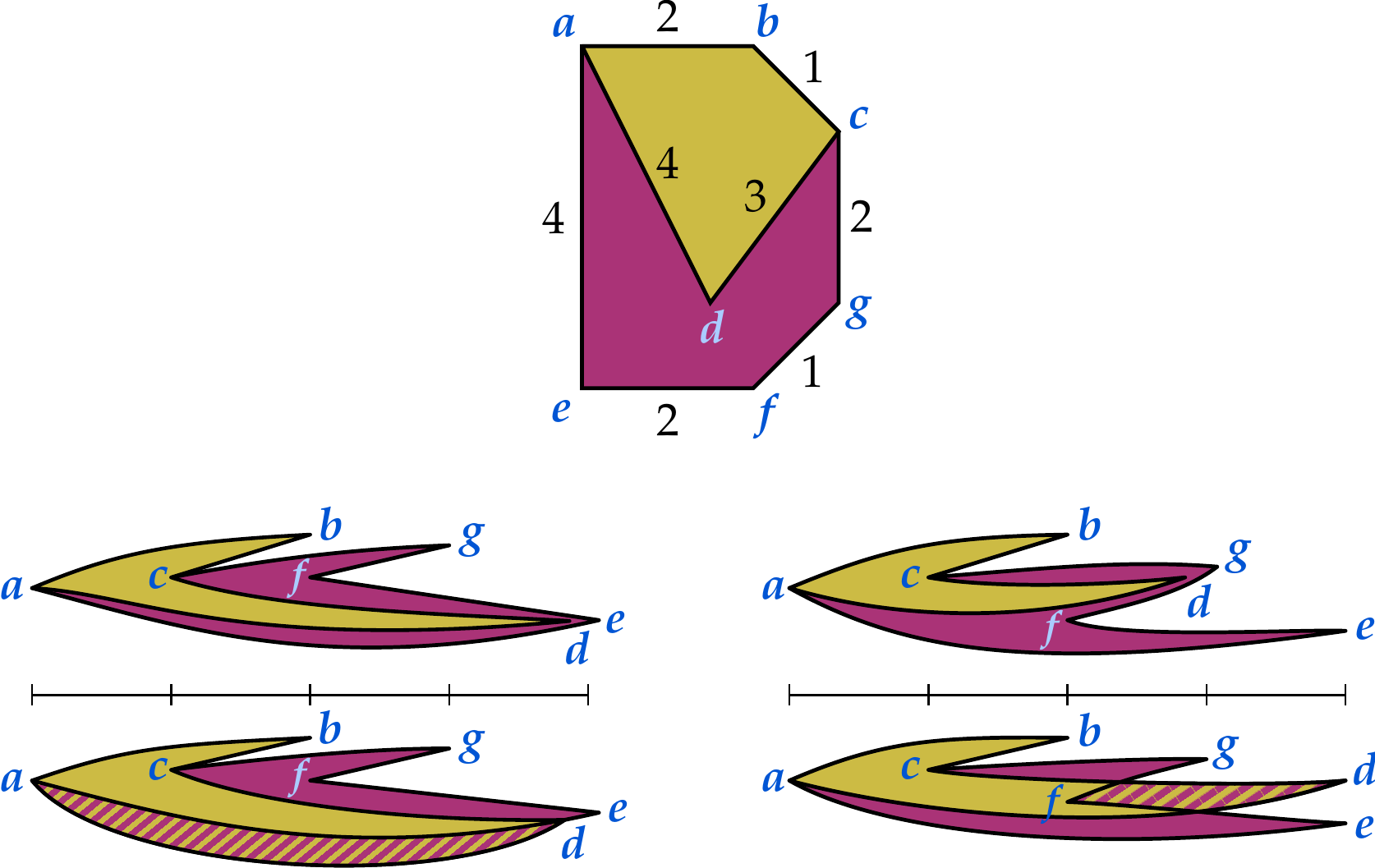}
  \caption{Four attempts to fold a graph with assigned edge lengths.  Top left: A valid folding of the graph.  Top right: Invalid because the lengths of edges \(ad\) and \(cd\) do not correspond to the original edge lengths.  Bottom left: Invalid for two reasons: the cyclic ordering of edges at vertex \(a\) is not respected, and the folding exhibits incorrect layering at vertices \(d\) and \(e\).  Bottom right: Invalid since edges cross over each other.}
  \label{fig:invalid}
\end{figure}

It is known that the graph flat folding problem is
strongly NP-complete in general,
and solvable in linear time if all edge lengths are equal
\cite{foldeq}.  But there are two natural variations on the problem,
posed in the same paper \cite{foldeq}.
In any flat folding, we can identify the angles between consecutive edges
around a vertex (as determined by the combinatorial planar embedding)
as either \defn{valley} ($0^\circ$), \defn{mountain} ($360^\circ$), or
\defn{unfolded/flat} ($180^\circ$).
(At each vertex, these angles must sum to $360^\circ$, so there is either
one mountain or two flats, and the rest are valleys.)
Now we can vary two aspects of the problem:
\begin{enumerate}
\item What if we are also given the angle (valley/mountain/flat) between
      every consecutive pair of edges around each vertex?
\item What if we forbid flat angles, and
      instead require just valleys and mountains?
\end{enumerate}

These parameters define four versions of the problem, as summarized in
Table~\ref{table:graph folding}.
The original paper \cite{foldeq} proved NP-completeness of the version
with no angles given and allowing flat angles.
Recent work shows that, if the angles are given, the problem becomes solvable
in linear time (independent of whether flat angles are allowed)
\cite{jocg}.
The remaining problem, studied here,
is the version where the angles are not given,
but flat angles are forbidden.

\begin{table}
  \definecolor{header}{rgb}{0.29,0,0.51}
  \definecolor{hard}{rgb}{1,0.85,0.85}
  \definecolor{open}{rgb}{0.95,0.95,0.5}
  \definecolor{easy}{rgb}{0.85,0.85,1}
  \newcommand\header{\cellcolor{header}\color{white}}

	\centering
	\begin{tabular}{cc|c|}
    \arrayrulecolor{white}
		& \header Flat angles forbidden & \header Flat angles allowed \\\hline
    \arrayrulecolor{black}
		\header Angles given & \multicolumn{2}{c|}{\cellcolor{easy}Linear time \cite{jocg}} \\
    \hhline{>{\arrayrulecolor{white}}->{\arrayrulecolor{black}}-|-|}
		\header Angles unspecified & \cellcolor{easy}$O(n^2 \log n)$ \cite{weak-embedding}${}\to \boldsymbol{O(n\log^3n)}$ [new] & \cellcolor{hard}NP-complete \cite{foldeq} \\ \hline
	\end{tabular}
	\caption{Complexity of different models of graph flat folding
    (based on \cite[Table~1]{jocg}, which in turn is based on open problems
    from \cite{foldeq}). Our new result is in the bottom-left.}
	\label{table:graph folding}
\end{table}

\paragraph{Connection to weak embeddings of graphs.}
Although not stated explicitly, this no-flat-angles
graph flat folding problem can be solved in polynomial time
by a reduction to ``weak embeddings of graphs''.
A key feature of this version of graph folding (in particular distinguishing it
from the NP-complete version with unknown angles that can be flat) is that the
relative coordinates of the vertices are determined by the input:
fixing one edge to go right from the origin, any path in the graph
alternates between going right and left by the specified edge lengths,
so a depth-first search fixes the vertex coordinates
\label{closure}
(and checks geometric closure constraints on cycles in the graph).
The graph flat folding problem is then equivalent to asking whether this
mapping from vertices to coordinates is a \defn{weak embedding} of the graph,
meaning that the vertices can be perturbed in the plane
within $\epsilon$-radius disks (for any $\epsilon > 0$),
and the edges can be similarly perturbed to Jordan curves
within distance~$\epsilon$ of the corresponding line segments,
so that we obtain a strict embedding
(no intersections except as intended at shared vertices).
Recognizing weak embeddings was recently solved in $O(n^2 \log n)$ time
\cite{weak-embedding},%
\footnote{The same paper \cite{weak-embedding} develops an $O(n \log n)$
  algorithm for weak embedding of graphs, but only when the given map
  is ``simplicial'', meaning that edges do not pass through other vertices.
  This property does not hold in general in the graph flat folding problem.}
so the same result applies to no-flat-angles graph flat folding.

\paragraph{Our results.}
In this paper, we give a faster algorithm for the no-flat-angles
graph flat folding problem.
Specifically, we show how to determine whether a graph can be folded flat
without flat angles in $O(n \log^3 n)$ time, which is tight up to
logarithmic factors.

We extend this result to the case where some angles are specified as flat,
and the problem asks to determine mountain or valley for each of
the remaining angles.
(The same extension also follows from the reduction to weak embedding.)
Thus what makes graph flat folding hard is not the existence of flat angles,
but deciding which angles are flat.

\paragraph{Application to single-vertex origami.}
The version of the graph flat folding problem we study
is particularly natural when viewed from the lens of computational origami.

\begin{figure}
  \centering
  \raisebox{-0.5\height}{%
    \begin{overpic}[trim=600 0 450 0, clip, scale=0.15]{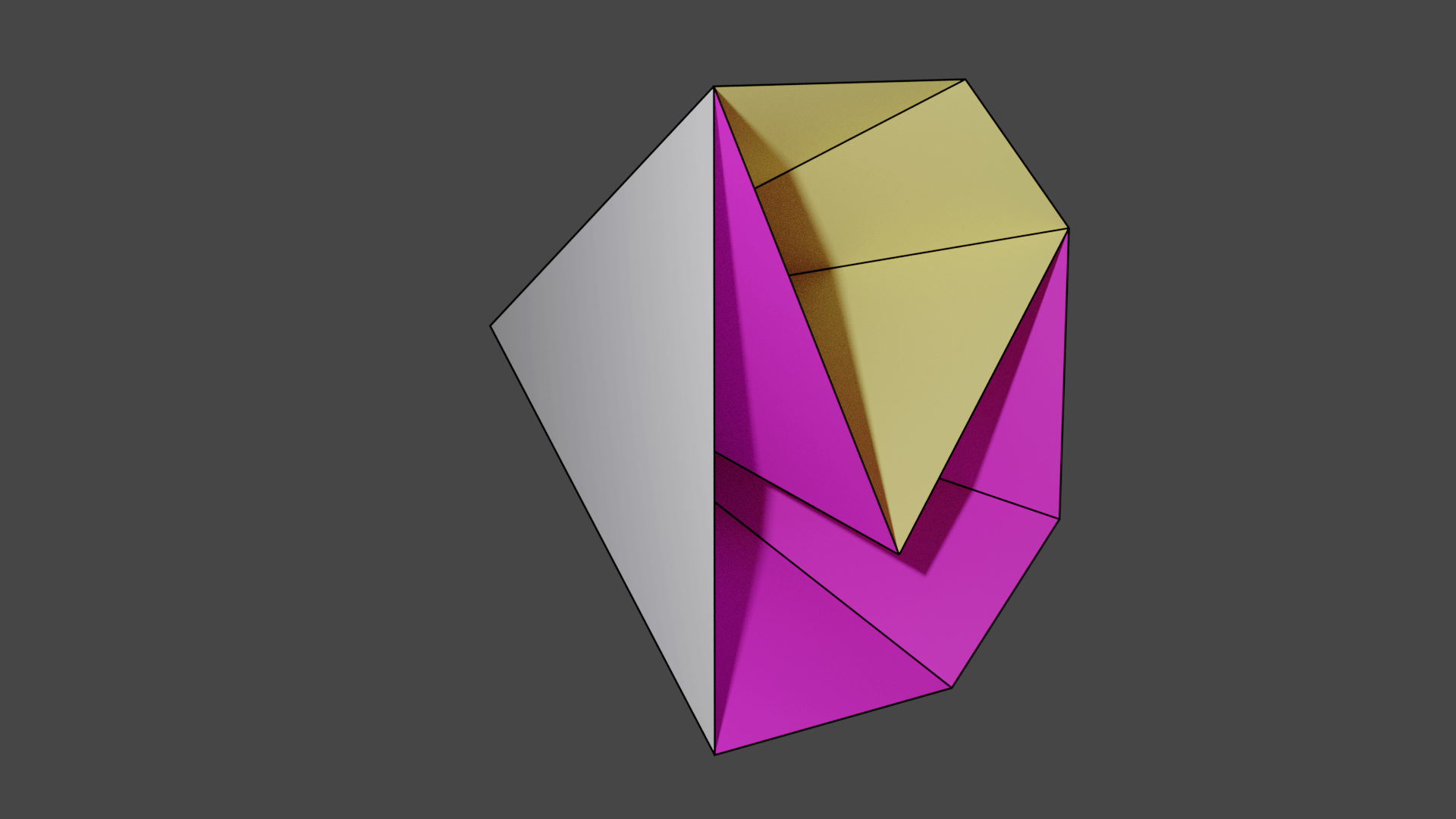}
      \put(0.5,60.5){\makebox(0,0)[l]{\color{white}$v$}}
    \end{overpic}
  }
  \hfill
  \raisebox{-0.5\height}{%
    \begin{overpic}[trim=150 225 100 300, clip, scale=0.15]{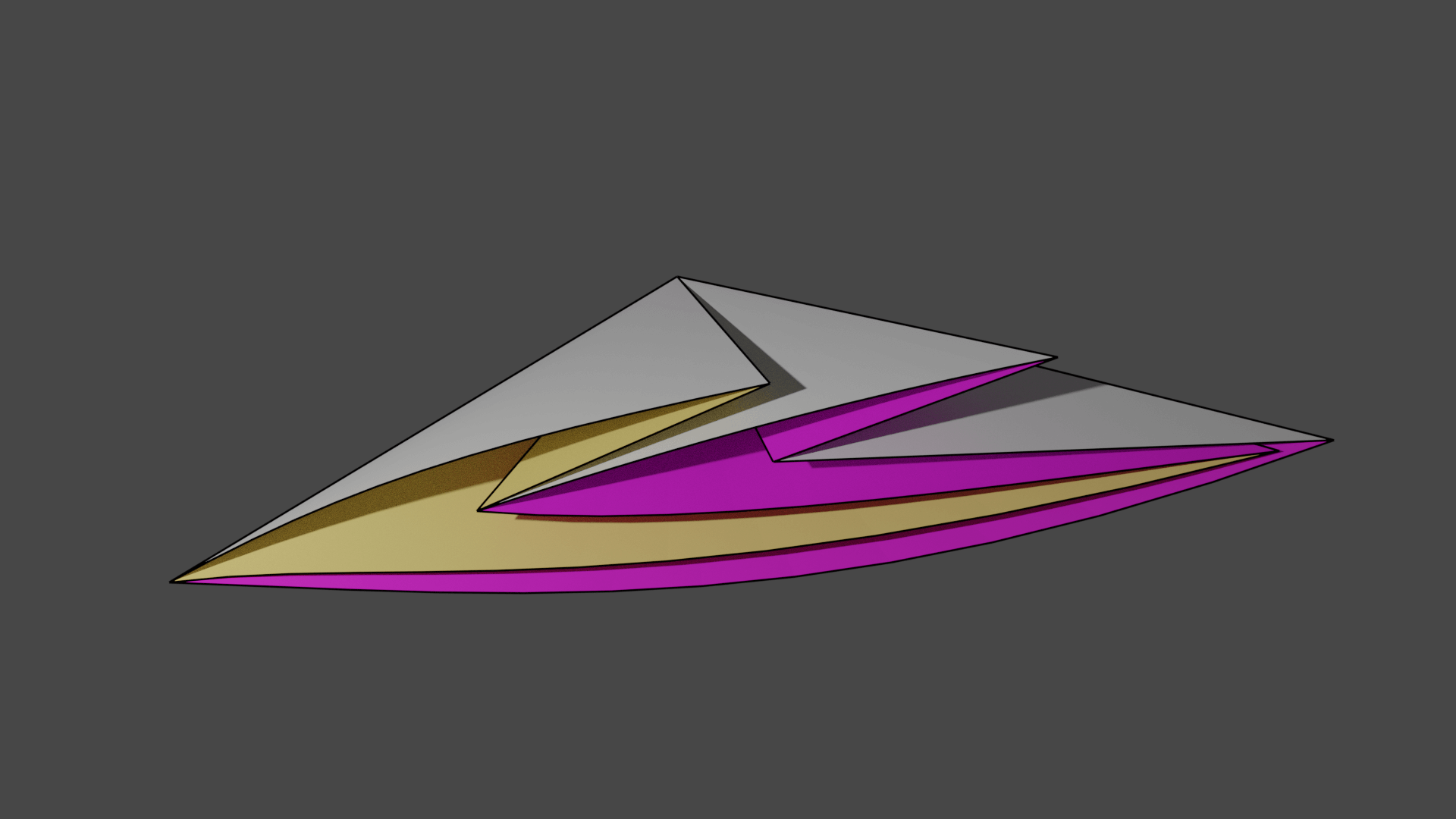}
      \put(44,31){\makebox(0,0)[c]{\color{white}$v$}}
    \end{overpic}
  }
  \caption{Unfolded and folded states of the single-vertex complex
  corresponding to the planar graph in Figure~\ref{fig:invalid}.}
  \label{fig:3d foldings}
\end{figure}

Define a \defn{single-vertex complex} to consist of $m$ polygons in 3D
where the polygons all share a common vertex~$v$, and all the shared
edges between these polygons are incident to~$v$,
as in Figure~\ref{fig:3d foldings} (left).
If we intersect such a single-vertex complex with a small sphere
centered at~$v$, we obtain a planar graph embedded on the sphere,
whose $m$ edge lengths are proportional to the $m$ polygon angles at~$v$.
In the example of Figure~\ref{fig:3d foldings} (left),
we obtain the planar graph in Figure~\ref{fig:invalid} (top).
A flat folding of the single-vertex complex into the plane
(according to standard origami definitions \cite{gfalop})
corresponds to a flat folding of the combinatorially embedded planar graph
with prescribed edge lengths \cite{foldeq,jocg}.
Figure~\ref{fig:3d foldings} (right) shows such a flat folding,
corresponding to the graph flat folding in
Figure~\ref{fig:invalid} (top left).
We can similarly consider the case of a single-vertex abstract complex ---
that is, an abstract metric space (not embedded in 3D) formed by
gluing planar polygons along edges, which all share a common vertex ---
together with the cyclic ordering of polygons around each shared edge.
Intersecting a single-vertex abstract complex with a small intrinsic sphere
centered at the shared vertex produces a graph flat folding problem,
and we can construct an arbitrary combinatorially embedded planar graph
with prescribed edge lengths by a suitable single-vertex abstract complex.
Indeed, we can construct a multigraph in this way, so we generally allow
graphs with multiple edges between the same two vertices.
Therefore graph flat folding is equivalent to origami flat foldability of
single-vertex (abstract) complexes.

When the planar graph is a cycle corresponding to $360^\circ$ of total angle
of polygons glued at a single vertex, we obtain what is known as a
\defn{single-vertex crease pattern} \cite[Section~12.2]{gfalop}:
creases emanating from a single vertex on a piece of paper.
At the first OSME (Origami Science/Mathematics/Education) conference in 1989,
Justin \cite{Justin-1989-math} and Kawasaki \cite{Kawasaki-1989a} presented
characterizations of which single-vertex crease patterns fold flat:
exactly those whose alternating sum of angles is zero.
(A complete proof of this characterization was not published until
Hull's 1994 paper \cite{Hull-1994}; see \cite[Section~5.9]{Hull-2020}.)
Crucially, this linear-time characterization
assumes that all creases must be folded either mountain or valley
(none can be left unfolded flat at an angle of $180^\circ$);
otherwise, single-vertex flat foldability
becomes weakly NP-complete~\cite{849-hinge-hardness}.

We see a similar behavior in Table~\ref{table:graph folding}
(bottom row),
where allowing mountain, valley, and flat angles makes the problem
NP-complete (even strongly), while our result shows that
allowing just mountain or valley makes the problem solvable in near-linear time.
Thus our result can be seen as a generalization of the Justin--Kawasaki
Theorem from flat paper to complexes with similar running time.
Previously, the theorem was generalized to cones of paper, where the angles
sum to a value other than $360^\circ$ \cite[Section 12.2.1]{gfalop},
but ours is the first generalization from manifolds to complexes
with near-linear running time.

The top row of Table~\ref{table:graph folding} corresponds to
single-vertex \defn{mountain-valley patterns}, where each crease is marked
as mountain or valley.  (Some creases could be marked unfolded/flat,
but this is equivalent to removing the crease.)
The previous work on given-angle complexes \cite{jocg}
can similarly be seen as a generalization of the previously known
linear-time characterization of single-vertex mountain-valley patterns
\cite{bernhayes}, \cite[Section 12.2.2]{gfalop}.

\paragraph{Organization.}
The rest of this paper is organized as follows.
First we restate two needed previous results in Section~\ref{sec:prior results}.
We then give a high-level overview of our algorithm in
Section~\ref{sec:description}, and detail the various components of the algorithm in Sections \ref{sec:face constraints}, \ref{sec:vertex constraints},
\ref{sec:solving csp}, and~\ref{sec:putting-together}.

\section{Background}\label{sec:prior results}

Our results rely on two previous results, which we restate here for completeness.

First, based on results of Hull~\cite{Hull-2001-survey},
Demaine and O'Rourke \cite{gfalop} characterized the flat-foldable mountain/valley assignments of a cycle, which we will apply to each face in a connected combinatorially embedded planar graph:

\begin{lemma}[{\cite[Corollary 12.2.12]{gfalop}}]
  \label{lem:cycle}
  Let \(f\) be a simple cycle with edge lengths \(\theta_1, \dots, \theta_n\).  
  If the edge lengths are all equal, then a crease assignment on \(f\) is flat foldable in precisely the following cases:
  \begin{itemize}
  \item \emph{Case A}: The cycle \(f\) is an interior face with equal-length edges, and there are exactly 2 more valley folds than mountain folds.
  \item \emph{Case B}: The cycle \(f\) is an exterior face with equal-length edges, and there are exactly 2 more mountain folds than valley folds.
  \end{itemize}

  Otherwise, take any maximal sequence \(e_m, \dots, e_{m + k - 1}\) of \(k\) contiguous equal-length edges surrounded by strictly longer edges, so that\footnote{The indices should be understood as being modulo \(n\).}
  \[\theta_{m - 1} > \theta_m = \dots = \theta_{m + k - 1} < \theta_{m + k}\]
  Then a crease assignment is flat foldable in precisely the following cases:
  \begin{itemize}
  \item \emph{Case C}: \(k\) is odd, and there are an equal number of mountain and valley folds incident to edges \(e_m, \dots, e_{m + k - 1}\).  Additionally, replacing all of the edges \(e_{m - 1}, \dots, e_{m + k}\) with a single edge of length \(\theta_{m - 1} - \theta_m + \theta_{m + k}\) yields a flat-foldable face with the same crease assignment.
  \item \emph{Case D}: \(k\) is even, and the numbers of mountain and valley folds incident to edges \(e_m, \dots, e_{m + k - 1}\) differ by \(\pm 1\).  Additionally, replacing all of the edges \(e_m, \dots, e_{m + k - 1}\) with a single new vertex yields a flat-foldable face,
    where the crease assignment is the same except that it assigns the new vertex to be the same type as the majority of the folds incident to \(e_m, \dots, e_{m + k - 1}\).  (That is, the new vertex is a mountain fold in this assignment if the number of mountain folds was 1 greater than the number of valley folds.)
  \end{itemize}
\end{lemma}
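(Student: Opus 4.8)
The plan is to prove the lemma by strong induction on the number $n$ of edges of the cycle, following the classical Kawasaki--Justin argument in the form developed by Hull and presented in \cite{gfalop}; indeed the statement is essentially \cite[Corollary~12.2.12]{gfalop}, so the task is to recall that argument rather than to invent a new one. Throughout, let $\#M$ and $\#V$ denote the numbers of mountain and valley folds in whichever set of creases is under discussion.

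I would first dispatch the base case in which all $\theta_i$ are equal (Cases~A and~B). For necessity of the count I would invoke Maekawa's theorem in oriented form: in any flat folding of a simple cycle, $\#V-\#M$ taken over all $n$ creases equals $+2$ when $f$ is the interior face and $-2$ when $f$ is the exterior face, the sign being fixed by a winding/turning argument applied to the folded image of $f$. For sufficiency, when all sectors are congruent I would exhibit an explicit flat folding of every assignment meeting this count: fold every crease as a fan in one rotational sense except the single minority crease and one of its neighbors, which together form the lone reversal; since all sectors coincide a non-crossing layer order always exists, so no extra condition is needed. Hence in the equal-length case flat-foldability is exactly the stated count.

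For the inductive step (not all equal, Cases~C and~D) I would fix any maximal run $e_m,\dots,e_{m+k-1}$ of equal sectors flanked by strictly longer ones --- so there are $k+1$ creases incident to the run --- and apply the Big--Little--Big (crimp) lemma. The key claim is that in any flat folding, because $\theta_m$ is strictly smaller than both flanking sectors, the paper occupying $e_m,\dots,e_{m+k-1}$ cannot reach past either flanking crease without forcing a crossing, so these $k$ sectors must fold as a clean accordion squeezed between two consecutive layers of a flanking sector; consequently the $k+1$ incident creases alternate mountain/valley. If $k$ is odd this forces $\#M=\#V$ among those creases, and the accordion flattens to angular width $\theta_m$ with reversed orientation, so deleting those creases and replacing $e_{m-1},\dots,e_{m+k}$ by a single edge of length $\theta_{m-1}-\theta_m+\theta_{m+k}$ yields a flat folding of a strictly shorter cycle carrying the induced assignment (Case~C). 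If $k$ is even the accordion flattens to zero net width, $\#M$ and $\#V$ differ by one, and collapsing the run to a single new crease of the majority type is the analogous reduction (Case~D). Conversely, given a flat folding of the reduced cycle I would re-insert the accordion --- there is always room since $\theta_m$ is a globally smallest length --- recovering a flat folding of the original. Thus original flat-foldability is equivalent to the stated local count together with flat-foldability of the strictly shorter reduced cycle, to which the induction hypothesis applies; and since this equivalence is with flat-foldability itself, it does not depend on which maximal run was chosen, which justifies the ``any'' in the statement.

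The hard part is the Big--Little--Big lemma: ruling out every non-accordion way of folding the strictly smallest run and checking that the accordion re-inserts without creating a crossing. This is a finite case analysis of the local layer structure driven by the non-crossing conditions for flat foldings, together with an exchange argument that lets one assume the minimal sectors stack consecutively. Since all of this is already carried out in \cite{gfalop} (building on \cite{Hull-2001-survey}), the paper should import it directly rather than reprove it.
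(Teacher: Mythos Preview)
The paper does not prove this lemma at all: it appears in Section~\ref{sec:prior results} as a background result, attributed to Hull~\cite{Hull-2001-survey} and cited as \cite[Corollary~12.2.12]{gfalop}, and is used as a black box thereafter. You correctly anticipate this in your final paragraph, so at the level of what the paper actually does, your proposal and the paper agree.

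That said, your sketch of the underlying classical argument has two genuine gaps that would need repair if you were to write the proof out. First, in the equal-length base case your sufficiency construction (``fan everything except the single minority crease and one neighbor'') presumes there is exactly one minority crease, but the condition $\#V-\#M=\pm 2$ allows many minority creases once $n\ge 6$ (e.g.\ an interior face with $n=8$ has three mountains). You must show that \emph{every} assignment with the correct count folds flat, and the standard way to do this is the same adjacent-opposite-pair crimp-and-induct used in the unequal case, not a single explicit layout. Second, in Cases~C and~D you claim the $k$ shortest sectors ``must fold as a clean accordion'' and hence the $k+1$ incident creases strictly alternate mountain/valley. This is too strong: for $k=3$ the pattern $M,M,V,V$ on the four incident creases meets the count condition and (together with foldability of the reduced cycle) is flat foldable by the very lemma you are proving, yet it does not alternate. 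The Big--Little--Big argument only guarantees the existence of \emph{some} adjacent $M/V$ pair inside the run that can be crimped away; iterating that crimp yields the count condition and the reduction to the shorter cycle, but never forces global alternation. Your reduction scheme is right in spirit, but the local mechanism you describe is stronger than what actually holds.
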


Second, Abel et al.~\cite{jocg} proved that a flat folding of a graph is equivalent to a compatible folding of each face:%
\footnote{A similar style of result
  (``faces being valid implies global validity'')
  was obtained in the context of upward drawings of graphs
  \cite[Theorem~3]{bertolazzi1994upward}.
  It also does not allow flat angles.
  That result, however, does not deal with prescribed edge lengths,
  which significantly complicates whether faces are flat foldable.}

\begin{theorem}[{\cite[Theorem 2]{jocg}}]
  \label{thm:fullgraph}
  Let \(G\) be a connected multigraph with an assignment of measures to every angle in \(G\).  That is, for each angle \(a\) we are given its measure \(m_a \in \{0\degree, 180\degree, 360\degree\}\).  Suppose that, for every face \(f\), the restriction of this assignment to \(f\) yields a flat-foldable mountain-valley assignment when \(f\) is treated as a simple cycle.
  Suppose also that the assignment is \defn{compatible} in that the sum of angles around each vertex is equal to 360\degree.
  Then there exists a flat folding of \(G\) whose angles have the assigned measures.
\end{theorem}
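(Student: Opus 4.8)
The plan is to build the global flat folding by combining a flat folding of each face. First observe that the angle assignment already pins down the geometry: fixing one edge to lie along a horizontal line $\ell$ pointing rightward, the valley/mountain/flat label of each angle determines whether consecutive edges at a vertex point the same way or opposite ways, so a depth-first search propagates a left/right direction to every edge and hence (using the edge lengths) an $x$-coordinate to every vertex. The compatibility hypothesis makes this well-defined at each vertex (an even number of flat angles there), and the flat-foldability of each face --- which by Lemma~\ref{lem:cycle} carries its own geometric closure condition --- makes the $x$-displacement vanish around every face boundary, hence around every cycle. So the folded positions on $\ell$ are forced, and all that remains is to produce a stacking order: a partial order on the edges that respects the combinatorial rotation at every vertex and is crossing-free. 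Lemma~\ref{lem:cycle} supplies, for each face $f$, at least one valid stacking order $\sigma_f$ on the edges of $f$; the task is to merge the $\sigma_f$ into one global order.

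I would merge them incrementally along a rooted spanning tree $T$ of the dual graph of $G$ (which is connected since $G$ is). Process faces in a root-to-leaves order, maintaining the invariant that the sub-multigraph $G_i$ spanned by the first $i$ faces has a crossing-free flat folding realizing all prescribed angles present in $G_i$ and restricting on each added face to some valid stacking of that face. To add face $f_i$, which is attached to $G_{i-1}$ along its tree-parent edge $e$ (and possibly further edges, coming from non-tree dual edges), splice the layers of $\sigma_{f_i}$ into the current stacking immediately beside $e$, on the side of $e$ that $f_i$ occupies in the embedding, after reflecting $\sigma_{f_i}$ top-to-bottom if necessary so that $e$ falls into the right slot; the remaining edges of $f_i$ are routed into this freshly opened ``pocket.'' Crossing-freeness is preserved because the pocket is disjoint from the rest of $G_{i-1}$'s layers while $\sigma_{f_i}$ is itself crossing-free, splicing never reorders the edges of a previously added face, and updating the rotation data at the shared vertices is routine given the valley/mountain/flat structure there.

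The subtle step --- and the one I expect to be the main obstacle --- is a face $f_i$ that touches $G_{i-1}$ along more than the single parent edge: then the splice at $e$ must be consistent with the configuration already forced at the other shared edges, i.e.\ the one bit of top-to-bottom freedom per face must be resolvable consistently around every cycle of the dual graph. The tool I would use is the rigidity of the local stacking at a vertex: once the valley/mountain/flat pattern at $v$ is fixed, the cyclic order in which the edges around $v$ stack is essentially forced --- two blocks of consecutively stacked edges at a two-flats vertex, one cyclic block at a one-mountain vertex --- so the stacking contributions of the various faces meeting at $v$ must agree, and propagating this rigidity along the boundary of $f_i$ determines the needed orientation. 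An alternative bookkeeping, perhaps cleaner, avoids the dual tree entirely: cut $\ell$ at the images of all vertices into intervals $I_1,\dots,I_r$, and over each $I_t$ choose a total order on the edges covering it; one shows that the partial orders coming from the face foldings together with the vertex rigidity conditions are mutually consistent and acyclic, so a simultaneous topological sort over all the intervals yields the global stacking order. Either way the crux is the same local-rigidity statement at vertices.
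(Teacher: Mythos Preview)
This theorem is not proved in the present paper at all: it is quoted verbatim as a background result from Abel et~al.~\cite{jocg} (see Section~\ref{sec:prior results}), so there is no ``paper's own proof'' to compare against. Any substantive comparison would have to be against the argument in~\cite{jocg}.

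As for the proposal on its own merits: the first paragraph (that the angle data plus compatibility and per-face closure determine the vertex coordinates) is correct and is exactly the observation the paper itself makes informally around the discussion of weak embeddings. The second paragraph---splicing a face folding into a pocket next to the tree-parent dual edge---is a sound local move. The difficulty you yourself flag in the third paragraph is real and is where your write-up stops being a proof and becomes a plan. When $f_i$ meets $G_{i-1}$ along several edges (equivalently, when adding $f_i$ closes a cycle in the dual), the single reflection bit you have available need not reconcile all the pre-existing layer positions of those shared edges simultaneously; your ``vertex rigidity'' heuristic says the cyclic layer order at each vertex is forced, but that is precisely what has to be \emph{proved} consistent around the cycle, not assumed. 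Concretely, you would need a lemma to the effect that the per-face layer orders $\sigma_f$, restricted to the edges around any vertex, all induce the same cyclic order (up to reversal) and that these reversals can be chosen coherently on all faces at once---this is a cocycle/orientability-type statement on the dual graph that you have asserted rather than established. The alternative interval-by-interval topological-sort idea has the same gap: acyclicity of the union of the partial orders over each interval is the whole content of the theorem, and you have not given an argument for it. So the outline is reasonable and points at the right obstacle, but the crux is left open.
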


Figure~\ref{fig:approach} shows an example of combining compatible flat foldings of individual faces to obtain a flat folding of the entire graph.

\begin{figure}
  \centering
  \includegraphics[width=\linewidth]{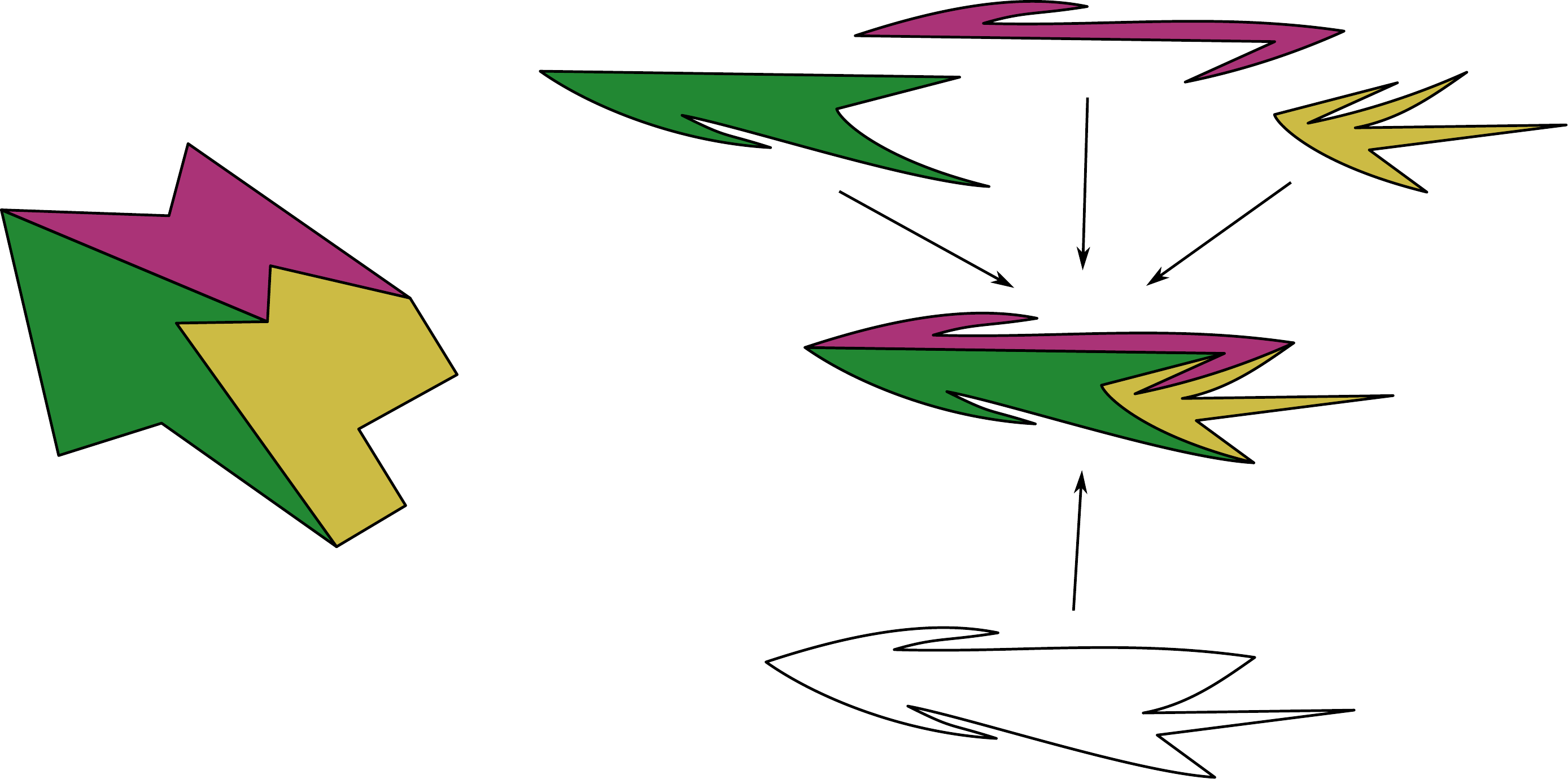}
  \caption{Left: A graph with three interior faces. Right: Given compatible flat foldings of all four faces, a flat folding of the graph can be generated.}
  \label{fig:approach}
\end{figure}

\section{Algorithm Overview}\label{sec:description}

In this section, we provide a high-level outline of our algorithm for determining whether a connected combinatorially embedded planar (multi)graph with prescribed edge lengths can be folded flat.
This algorithm takes as input a combinatorial embedding of the graph \(G\)
(which we allow to have multiple edges between the same two vertices)
and an assignment of lengths to the edges of \(G\).
We assume for now that the graph is connected and that we are given
a single face of \(G\) designated as the \defn{exterior};
in Section~\ref{sec:extensions}, we will remove both constraints.

By Theorem~\ref{thm:fullgraph}, determining whether such a graph has a flat folding is equivalent to determining whether there are compatible flat-foldable crease assignments for each face. 
To accomplish this, we reduce the graph flat folding problem to a boolean constraint satisfaction problem, with constraints deriving both from the requirement that each face needs to be flat foldable and from the compatibility requirement between faces. The variables will correspond to angles in each face of the graph (including some angles only present in virtual intermediate states), and indicate whether that angle is a valley fold or a mountain fold. The resulting constraint satisfaction problem has the following structure:
\begin{itemize}
  \item Each clause specifies an exact number of true variables in some set.
  \item Each variable appears in exactly two clauses.
  \item The graph whose vertices are clauses and whose edges are variables, connecting the two clauses in which each variable appears, is bipartite. In particular, each vertex appears in exactly one clause on each side of the bipartition.
  \item The same graph is planar.
\end{itemize}
We call such a problem \defn{planar bipartite positive $*$-in-$*$SAT-E2}.
This terminology generalizes the standard notion of
``positive $i$-in-$k$SAT'' \cite{ivan-thesis,Mulzer-Rote-2008}
where every clause requires satisfying exactly $i$ out of (up to) $k$
variables, which are never negated (hence ``positive''),
to the situation where number of variables and required true variables
may vary from clause to clause.
The standard suffix ``-E2'' represents the requirement that every variable
appears in exactly two clauses \cite{ivan-thesis,Chlebik-Chlebikova-2008}.
The ``planar'' prefix is also standard \cite{ivan-thesis,Mulzer-Rote-2008},
while the ``bipartite'' prefix is new
(and makes sense only with the ``-E2'' requirement).

In Section~\ref{sec:face constraints}, we describe the constraints which express that each face must be folded flat.
In Section~\ref{sec:vertex constraints}, we describe the constraints capturing compatibility between faces, and prove that the resulting constraint problem is equivalent to flat folding the graph.
In Section~\ref{sec:solving csp}, we show that planar bipartite positive \textsc{$*$-in-$*$SAT-E2} can be solved in $O(n\log^3n)$ time through a reduction to a flow problem.
In Section~\ref{sec:putting-together}, we put the pieces together to obtain
our main result, and describe three extensions:
to graphs with some prescribed flat angles,
to disconnected graphs, and to graphs with unknown exterior face.

\section{Single Face Constraints}\label{sec:face constraints}

In this section, we describe the constraints obtained from the requirement that each face of the graph is folded flat.  Although faces of the graph may not be bound by simple cycles (in the case of cut vertices), there exists a simple cycle corresponding to each face.  This cycle can be constructed by enumerating the face's incident edges and angles in order, duplicating any repeated vertices or edges.  Although there may exist flat foldings of this simple cycle which do not correspond to flat foldings of the original face, Theorem~\ref{thm:fullgraph} tells us that compatible flat foldings of the corresponding simple cycles are sufficient for flat foldability of the full graph.  From here on, when we discuss flat foldability of an individual face, we will actually be referring to flat foldability of the corresponding simple cycle.

Now consider flat folding a single face \(f\).
We check (and henceforth assume) that the edge lengths satisfy the
basic closure property (mentioned in Section~\ref{closure})
that the number of edges is even and
the alternating sum of edge lengths is zero;
otherwise, flat folding is impossible.
It remains to determine flat-foldable mountain/valley assignments.

We introduce a boolean variable \(x_a\) for each angle \(a\) in \(f\). These variables represent an assignment of creases: \(x_a = 0\) if \(a\) is a valley fold (0\degree) and \(x_a = 1\) if \(a\) is a mountain fold (360\degree).  We present an algorithm which, given the edge lengths and interior/exterior assignment of \(f\) and a variable \(x_a\) assigned to each angle \(a\) in \(f\), generates a set of constraints \(C_f\) on the variables \(x_a\), possibly introducing additional variables, such that solutions to this constraint problem correspond to flat-foldable crease assignments of \(f\). The constraints are of the form ``exactly $c$ variables from a set $S$ are true,'' which we write \[\sum\limits_{x\in S}x=c.\]  Additionally, each constraint generated will be colored either \textcolor{red}{red} or \textcolor{blue}{blue}; this coloring will be used later to show that the constraint satisfaction problem is bipartite.  The algorithm essentially follows Lemma~\ref{lem:cycle}:

\begin{itemize}
  \item If all edges of \(f\) have equal length, then let \(V\) be the set of all angles in \(f\),
  and let \(b\) be \(-1\) if \(f\) is an interior face, or \(+1\) if \(f\) is an exterior face.
  Generate just the \textcolor{red}{red} constraint
  \begin{equation}
    \label{eqn:equal_case}
    \sum_{a \in V} x_a\ \textcolor{red}{=}\ \frac{|V|}{2} + b.
  \end{equation}

  \item Otherwise, not all of the edges of \(f\) have equal length.  Find a sequence \(e_m, \dots, e_{m + k - 1}\) of \(k\) consecutive equal-length edges, such that \(\theta_{m - 1} > \theta_m = \dots = \theta_{m + k - 1} < \theta_{m + k}\); this is guaranteed to exist by considering a maximal sequence of consecutive edges with minimum length.  Let \(S\) be the set of angles in \(f\) incident to \(e_m, \dots, e_{m + k - 1}\).

  \item If \(k\) is odd (i.e., \(|S|\) is even), generate the \textcolor{red}{red} constraint
  \begin{equation}
    \label{eqn:odd_case}
    \sum_{a \in S} x_a\ \textcolor{red}{=}\ \frac{|S|}{2}.
  \end{equation}
  Having done so, replace all the edges \(e_{m - 1}, \dots, e_{m + k}\) with a single edge of length \(\theta_{m - 1} - \theta_m + \theta_{m + k}\) to construct a smaller face \(f^\prime\), and recursively output the constraints in \(C_{f^\prime}\).

  \item If instead \(k\) is even (i.e., \(|S|\) is odd), introduce two fresh boolean variables \(y\) and \(z\), and generate the following \textcolor{red}{red} and \textcolor{blue}{blue} (respectively) constraints:
  \begin{align}
    \label{eqn:even_case_1}
    y + \sum_{a \in S} x_a\ &\textcolor{red}{=}\ \frac{|S| + 1}{2}; \\
    \label{eqn:even_case_2}
    y + z\ &\textcolor{blue}{=}\ 1.
  \end{align}
  Then replace all the edges \(e_m, \dots, e_{m + k - 1}\) with a single new angle whose associated variable is \(z\) to construct a smaller face \(f^\prime\), and recursively output the constraints in \(C_{f^\prime}\).
\end{itemize}

\begin{figure}
	\centering
  \includegraphics[width=\linewidth]{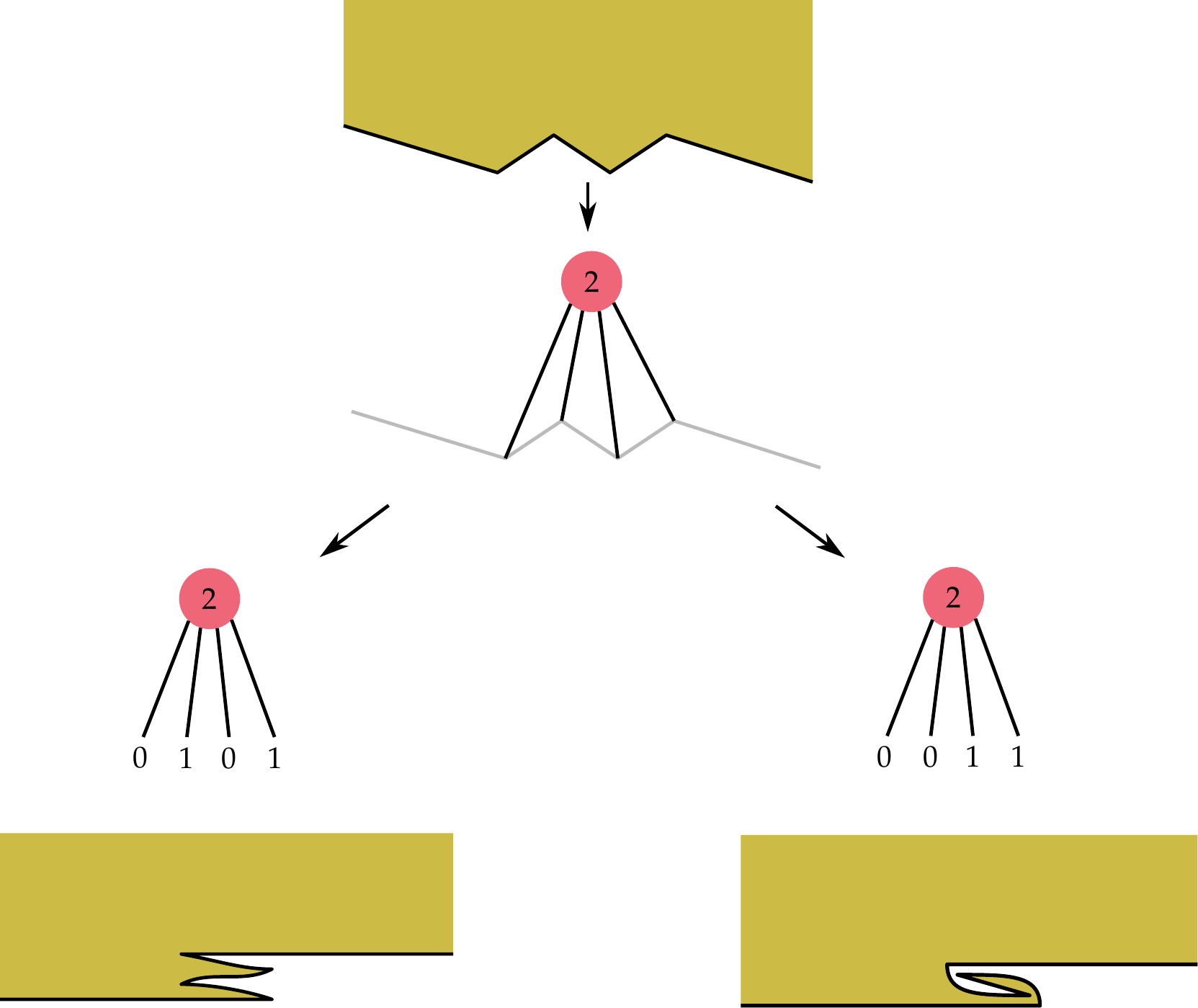}
	\caption{The case where \(k\) is odd.  In this diagram, circular vertices represent constraints and edges represent boolean variables.  Top: The \textcolor{red}{red} constraint expresses that exactly half of the creases must be mountain folds and the others must be valley folds.  Bottom: Two possible satisfying variable assignments and the associated local foldings.}
	\label{fig:odd-constraint}
\end{figure}

\begin{figure}
	\centering
  \includegraphics[height=14cm]{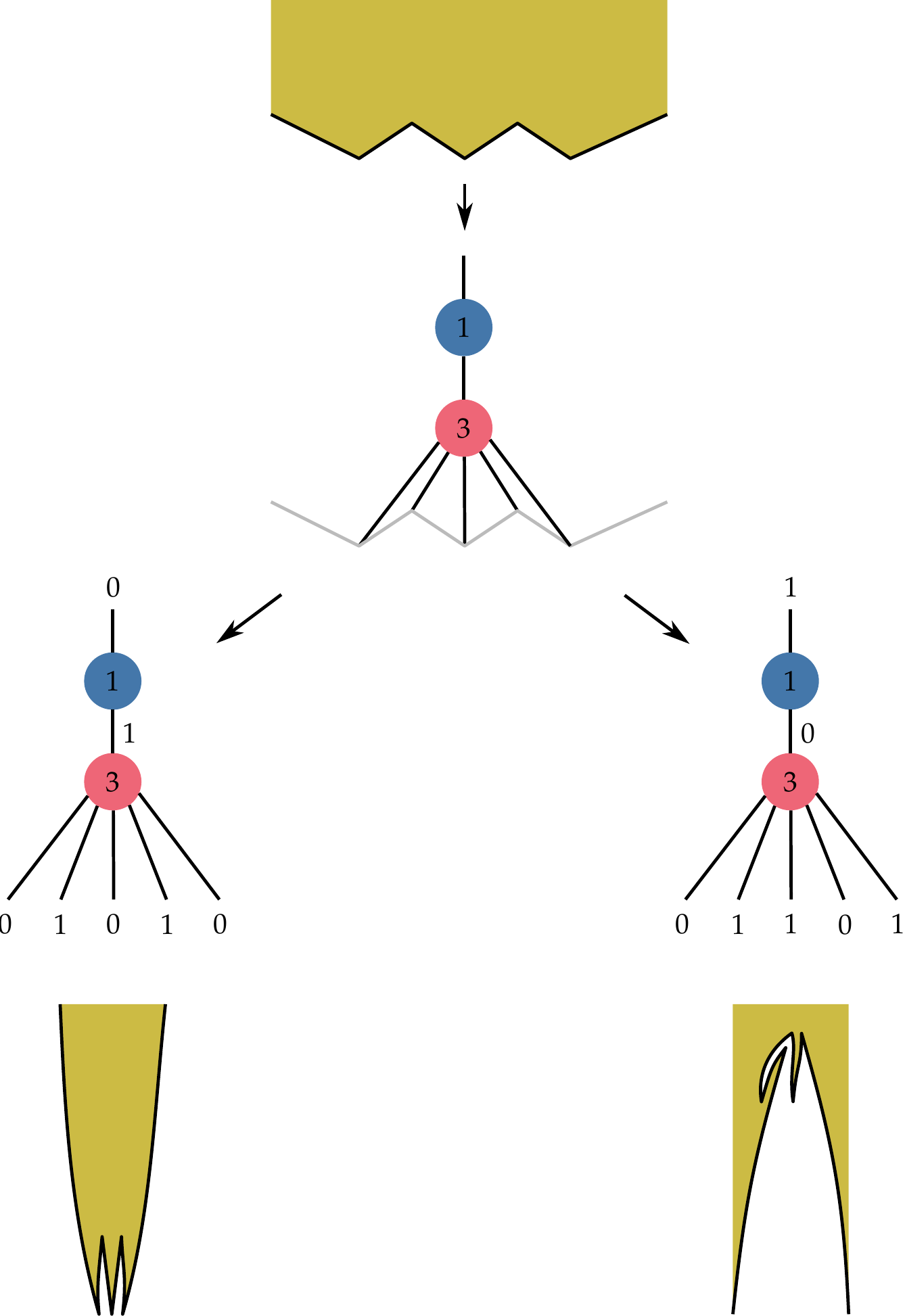}
	\caption{The case where \(k\) is even.  The pair of constraints expresses that the number of mountain folds and valley folds must differ by 1, and the majority value is equal to the newly generated variable.}
	\label{fig:even-constraint}
\end{figure}

We now show that solutions to the constraints generated by this algorithm correspond to flat-foldable crease assignments of \(f\).

\begin{theorem}\label{thm:face constraints correct}
  A mountain/valley assignment for \(f\) is flat foldable if and only if it can be extended to a satisfying assignment of \(C_f\).
\end{theorem}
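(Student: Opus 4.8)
The plan is to prove the equivalence by induction on the number of edges of $f$, following exactly the recursive structure of the constraint-generating algorithm, which itself mirrors Lemma~\ref{lem:cycle}. First I would note that the recursion is well founded and that Lemma~\ref{lem:cycle} is applicable throughout: each of the two reduction steps strictly decreases the number of edges, and both preserve the basic closure property (even number of edges, zero alternating sum of lengths). The $k$-odd step replaces $k+2$ consecutive edges by one, changing the edge count by the even number $k+1$ and --- by the careful choice of length $\theta_{m-1}-\theta_m+\theta_{m+k}$ --- leaving the alternating sum zero; the $k$-even step deletes the $k$ (an even number of) equal edges of the run, whose alternating contribution is zero. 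So the base case is exactly ``all edges of $f$ are equal''.

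For the base case I would match \eqref{eqn:equal_case} to Cases~A and~B of Lemma~\ref{lem:cycle}: with $x_a=1$ marking mountains and $\#\textrm{valleys}+\#\textrm{mountains}=|V|$, ``two more valleys than mountains'' is equivalent to $\sum_{a\in V}x_a=|V|/2-1$ (interior face, $b=-1$), and ``two more mountains than valleys'' to $\sum_{a\in V}x_a=|V|/2+1$ (exterior face, $b=+1$). Here $C_f$ consists only of \eqref{eqn:equal_case} and introduces no auxiliary variables, so ``extends to a satisfying assignment of $C_f$'' means simply ``satisfies \eqref{eqn:equal_case}'', and the base case is immediate.

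For the inductive step I would treat the two recursive cases in parallel, using the observation that the chosen run $e_m,\dots,e_{m+k-1}$ has exactly $|S|=k+1$ incident angles (the $k-1$ angles interior to the run plus one on each end). When $k$ is odd, $|S|$ is even, and Case~C of Lemma~\ref{lem:cycle} says the assignment is flat foldable iff (i) mountains and valleys among $S$ are balanced, i.e.\ $\sum_{a\in S}x_a=|S|/2$, which is exactly \eqref{eqn:odd_case}, and (ii) the induced assignment on the reduced face $f'$ is flat foldable. Since the angle set of $f'$ is that of $f$ with $S$ deleted, no fresh variable is created and $C_f$ is just $C_{f'}$ together with the new constraint~\eqref{eqn:odd_case} (whose variables all lie in $f$); thus a mountain/valley assignment of $f$ extends to a satisfying assignment of $C_f$ iff it satisfies~\eqref{eqn:odd_case} and its restriction to $f'$ extends to $C_{f'}$, which by the inductive hypothesis is exactly (i) and~(ii). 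When $k$ is even, $|S|=k+1$ is odd; writing $M=\sum_{a\in S}x_a$, the ``differ by $\pm1$'' requirement of Case~D amounts to $M\in\{(|S|-1)/2,(|S|+1)/2\}$, which is precisely the condition that \eqref{eqn:even_case_1} admit a (then unique) value $y\in\{0,1\}$; \eqref{eqn:even_case_2} then forces $z=1-y$, and one checks $z=1$ exactly when mountains are the majority in $S$ and $z=0$ exactly when valleys are --- so $z$ is exactly the ``majority type'' Case~D assigns to the merged angle $\alpha$. Since the angle set of $f'$ is that of $f$ with $S$ deleted and the new angle $\alpha$ (with variable $z$) adjoined, and $C_f$ is $C_{f'}$ together with \eqref{eqn:even_case_1} and~\eqref{eqn:even_case_2}, the same ``compose the extensions'' argument closes the induction.

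The step I expect to require the most care is the $k$-even case: showing that the two-clause gadget with auxiliary variable $y$ simultaneously enforces the near-balance of mountains and valleys on the run and propagates the majority crease type to the merged angle, so that satisfying extensions of a given mountain/valley assignment of $f$ correspond bijectively to (flat-foldability of the current reduction) $\times$ (satisfying extensions of $C_{f'}$). The rest is bookkeeping --- matching $|S|=k+1$, the $+2$ and $\pm1$ offsets, and the interior/exterior sign $b$ to Lemma~\ref{lem:cycle}, and observing that the constraints added at each level mention only already-named variables (plus the fresh $y$), so the recursive assembly of $C_f$ and the composition of partial-assignment extensions go through routinely.
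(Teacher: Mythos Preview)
Your proposal is correct and follows essentially the same approach as the paper: induction along the recursive structure of the constraint-generating algorithm, matching \eqref{eqn:equal_case} to Cases~A/B, \eqref{eqn:odd_case} to Case~C, and the $y,z$ gadget to Case~D of Lemma~\ref{lem:cycle}. Your write-up is in fact more thorough than the paper's own proof---you make explicit the well-foundedness and closure-preservation checks and spell out the $k$-even arithmetic showing $z$ encodes the majority type---but the underlying argument is identical.
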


We may need to extend the assignment to account for variables introduced in the case where $k$ is even. The values for these variables are forced by the constraints added when the variables are introduced, and can be determined by considering variables in the order they were added.

\begin{proof}
  The proof is by induction on the size of \(f\).
  If all the edges of \(f\) have equal length, then it is immediate by cases A and B of Lemma \ref{lem:cycle}
  that an assignment is flat foldable if and only if equation (\ref{eqn:equal_case}) holds.

  When the edges are not all equal in length, the algorithm finds some maximal sequence \(e_m, \dots, e_{m + k - 1}\) of \(k\) equal-length edges surrounded by strictly longer edges, whose incident angles we call \(S\).

  If \(k\) is odd, then by case C of Lemma \ref{lem:cycle}, the assignment is flat foldable for \(f\) if and only if it both assigns an equal number of mountain and valley folds to the angles in \(S\), and is also a flat-foldable crease assignment for \(f^\prime\), where \(f^\prime\) is the face resulting from replacing the edges \(e_{m - 1}, \dots, e_{m + k}\) with a single edge of length \(\theta_{m - 1} - \theta_m + \theta_{m + k}\).
  The first condition is just equation (\ref{eqn:odd_case}), and the second is equivalent by the inductive hypothesis to the set of constraints \(C_{f^\prime}\) obtained by recursion on \(f^\prime\). An example of this case is shown in Figure~\ref{fig:odd-constraint}.

  If \(k\) is even, then by case D of Lemma \ref{lem:cycle}, the assignment is flat foldable if and only if the mountain and valley folds assigned to the angles in \(S\) differ by 1,
  and it is also a flat-foldable crease assignment for \(f^\prime\) when suitably extended.
  Here \(f^\prime\) is the face resulting from replacing the edges \(e_m, \dots, e_{m + k - 1}\) with a single angle,
  and the assignment is extended to assign the new angle to be the same type as the majority of the folds it assigned to the angles in \(S\).
  Equation (\ref{eqn:even_case_1}) constrains the number of folds to differ by 1, where \(y\) is the minority fold type,
  and equation (\ref{eqn:even_case_2}) constrains \(z\) to be the opposite of \(y\), so \(z\) is the majority fold type.
  By the inductive hypothesis, the constraints \(C_{f^\prime}\) obtained by recursion on \(f^\prime\) are equivalent to the statement that \(f^\prime\) is flat foldable under the assignment extended to assign \(z\) to the new angle. An example of this case is shown in Figure~\ref{fig:even-constraint}.

  In all cases, we find that the assignment is flat foldable if and only if it satisfies the constraints.
\end{proof}

We also show that these constraints can be computed efficiently
and satisfy certain properties which will be useful for solving them.

\begin{theorem}\label{thm:face constraints linear}\label{thm:face constraints planar}
  The algorithm for computing \(C_f\) takes time linear in the number of angles in \(f\).  The variables and clauses of \(C_f\) form a graph in which graph vertices correspond to clauses and graph edges correspond to variables, when an additional \textcolor{blue}{blue} graph vertex is added for each angle of \(f\).  Then this graph is a bipartite (i.e. 2-colored) forest with linearly many vertices, and there is a planar embedding of this graph within \(f\) such that each vertex corresponding to an angle of \(f\) is located at the vertex of \(f\) incident to that angle.
\end{theorem}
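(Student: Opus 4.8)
The plan is to prove all four assertions --- the $O(|f|)$ running time, the proper $2$-coloring, the acyclicity, and the existence of the embedding --- simultaneously, by induction following the recursion of the algorithm, with the all-equal-length case as the base. Write $H_f$ for the graph associated with $C_f$: its vertices are the clauses of $C_f$ together with one extra \textcolor{blue}{blue} \emph{angle-vertex} $b_a$ placed at the vertex of $f$ incident to each angle $a$, and its edges are the variables of $C_f$, each joining the clauses in which it occurs, and with $x_a$ additionally joined to $b_a$.

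First I would record the clause-membership of each variable, from which the combinatorial claims fall out. Each angle variable $x_a$ of $f$ occurs in exactly one clause of $C_f$ --- the one emitted when $a$ is stripped, i.e.\ constraint~(\ref{eqn:equal_case}),~(\ref{eqn:odd_case}), or~(\ref{eqn:even_case_1}) --- and every such clause is \textcolor{red}{red}; hence $b_a$ has degree~$1$. The fresh variable $y$ occurs in the \textcolor{red}{red} clause~(\ref{eqn:even_case_1}) and the \textcolor{blue}{blue} clause~(\ref{eqn:even_case_2}), and $z$ occurs in~(\ref{eqn:even_case_2}) and in exactly one more clause, which is again a \textcolor{red}{red} stripping clause since $z$ becomes an angle variable of $f'$. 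So every edge of $H_f$ joins a \textcolor{red}{red} vertex to a \textcolor{blue}{blue} one, and coloring the \textcolor{red}{red} clauses \textcolor{red}{red} and everything else \textcolor{blue}{blue} is a proper $2$-coloring. For acyclicity I would compare $H_f$ with the graph $H_{f'}$ of the recursive subproblem. In the odd case the angles in $S$ are exactly the angles of $f$ that vanish in $f'$ and the $x_a$ ($a\in S$) are new variables, so $H_f$ is the disjoint union of $H_{f'}$ with the star centered at the new clause vertex of~(\ref{eqn:odd_case}) with leaves $\{b_a:a\in S\}$. In the even case $f'$ has the angles of $f$ outside $S$ plus one new angle with variable $z$, and since the angle-vertex $b_z$ of $H_{f'}$ has degree~$1$, $H_f$ results from $H_{f'}$ by re-reading $b_z$ as the clause vertex of~(\ref{eqn:even_case_2}) and attaching at it a tree: an edge $y$ to the new clause vertex of~(\ref{eqn:even_case_1}), from which edges $x_a$ run to new angle-vertices $\{b_a:a\in S\}$. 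Either operation turns the forest $H_{f'}$ into a forest, and the base case is a single star; since every recursive call deletes at least one cycle-edge and creates $O(1)$ clause vertices and there is one angle-vertex per angle, $H_f$ has $O(|f|)$ vertices.

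For the embedding I would strengthen the inductive hypothesis to: $H_f$ embeds in the closed disk bounded by $f$ with each $b_a$ on the boundary at its vertex and with $H_f$ meeting the boundary only at these (degree-one) angle-vertices; then a thin open collar along any cycle-edge, and a small disk around any cycle-vertex, contains essentially no part of $H_f$. To pass from $H_{f'}$ to $H_f$ note that $f$ arises from $f'$ by re-expanding one cycle-edge of $f'$ into a path (odd case) or one cycle-vertex of $f'$ into a path (even case); the free region supplied by the invariant gives room to perform this re-expansion, to place the new clause vertices just inside it, to draw the new star (and, in the even case, the short path through the relabeled $b_z$), and to land the new angle-vertices on the new boundary vertices in the correct cyclic order, all without crossings. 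I expect this to be the main obstacle: one has to carry exactly the right invariant so that the feature of $f'$ being manipulated is always exposed on the current face boundary with empty surroundings.

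Finally, for the running time I would realize the recursion as a single stack sweep around $f$. The stack holds the maximal equal-length runs seen so far, strictly decreasing in length from bottom to top; as soon as the next edge is longer than the top run, that run is a valley run --- the run below it and the incoming edge are both strictly longer, and it is maximal because equal-length neighbors get merged --- so it is processed: its constraint(s) are emitted and it is either replaced, together with its two flanking edges, by a single longer edge, or contracted to a new angle, after which processing resumes and may cascade into further valley runs. Each original or newly created edge enters the stack once and leaves once, and only $O(|f|)$ edges are ever created, so the sweep is $O(|f|)$ amortized; emitting a constraint costs time proportional to the number of variables it names, and those totals are $O(|f|)$ since each angle variable is named once and each fresh $y$ or $z$ twice. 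The one remaining wrinkle is the cyclic wrap-around, handled by cutting the cycle at a maximal run of maximum length --- which is never a valley run unless the whole face is equal-length --- and reconciling the two ends at the finish. Together these give $C_f$ in $O(|f|)$ time, which with the above completes all four claims.
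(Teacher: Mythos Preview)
Your proposal is correct and follows essentially the same inductive skeleton as the paper: analyze the base case (all-equal star), then the odd and even recursive cases, showing in each that $H_f$ is obtained from $H_{f'}$ by attaching a star or a short tree at a boundary leaf, which preserves bipartiteness, acyclicity, linear size, and the planar embedding inside~$f$.

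The one place you genuinely diverge is the linear-time argument. The paper maintains a cyclic doubly-linked list of maximal equal-length runs together with a worklist of runs currently flanked by strictly longer neighbors, updating both in $O(1)$ per iteration. You instead linearize the cycle by cutting at a maximum-length run and then perform a monotone-stack sweep, popping and processing a run whenever the incoming edge is longer. Both are valid; your stack approach is arguably more familiar (it is the standard ``nearest larger neighbor'' pattern) and makes the amortized bound transparent, while the paper's worklist more directly mirrors the nondeterministic ``pick any valley run'' phrasing of the recursion and avoids the need to argue about the cut point and wrap-around. Your observation that at most $O(|f|)$ new edges are ever created (since each recursive step strictly shrinks the face) is the same counting argument the paper relies on, just stated for a different data structure.
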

\begin{proof}
  Let \(n\) be the number of angles in \(f\).  We prove by induction that \(C_f\) forms a graph as described with at most \(2n\) vertices.

  In the case where the edges all have equal length, \(C_f\) is a star graph whose central vertex is a \textcolor{red}{red} clause and whose outer vertices are the \textcolor{blue}{blue} angles of \(f\), so it is a bipartite forest with \(n + 1 \le 2n\) vertices.  The planar embedding can be achieved by placing the central vertex within \(f\) and drawing edges to all the vertices of \(f\).

  When the edges of \(f\) are not all equal in length, the algorithm finds some sequence of \(k\) edges whose \(k + 1\) incident angles we call \(S\).
  Let \(T\) be the star graph whose central vertex is the \textcolor{red}{red} clause added in this step and whose outer vertices are the \textcolor{blue}{blue} angles of \(S\); this is a bipartite forest with \(k + 2\) vertices.

  When \(k\) is odd, the graph \(C_f\) is simply the disjoint union of \(C_{f^\prime}\) and \(T\),
  where \(f^\prime\) is a face with \(n - k - 1\) angles.
  By the inductive hypothesis \(C_{f^\prime}\) is a bipartite forest with at most \(2(n - k - 1)\) vertices,
  so \(C_f\) is a bipartite forest with at most \(k + 2 + 2(n - k - 1) = 2n - k \le 2n\) vertices.
  The planar embedding of \(C_f\) is obtained from the planar embedding of \(C_{f^\prime}\) by simply placing \(T\) alongside it;
  none of the edges need to cross because the angles in \(S\) are contiguous in \(f\).

  When \(k\) is even, the graph \(C_f\) is formed from the disjoint union of \(C_{f^\prime}\) and \(T\) by adding an edge from the \textcolor{red}{red} central vertex of \(T\) to the \textcolor{blue}{blue} vertex corresponding to some angle \(a^\prime\) of \(f^\prime\), where \(f^\prime\) is a face with \(n - k\) angles.
  By the inductive hypothesis \(C_{f^\prime}\) is a bipartite forest with at most \(2(n - k)\) vertices,
  so \(C_f\) is a bipartite forest with at most \(k + 2 + 2(n - k) = 2n - k + 2 \le 2n\) vertices.
  The planar embedding of \(C_f\) is obtained from the planar embedding of \(C_{f^\prime}\) by first placing \(T\) alongside it as before;
  again none of the edges cross because \(S\) is contiguous in \(f\).  Then the edge from the central vertex of \(T\) to the vertex corresponding to \(a^\prime\) can be added without crossing because \(a^\prime\) occurs in the same place in \(f^\prime\)'s cyclic order of angles as \(S\) does in \(f\).

  Thus \(C_f\) is a linear-sized bipartite forest with the desired planar embedding.
  We need to show that it can be computed in linear time.
  It is straightforward to charge the work performed by the algorithm at each step to the newly created vertices,
  except for finding the sequence \(e_m, \dots, e_{m + k - 1}\) of equal-length edges surrounded by strictly longer edges.
  We cannot accomplish this by simply scanning through the edges of the face at each iteration, since this would take linear time and there might be linearly many iterations.
  We instead solve this by maintaining a cyclic doubly-linked list \(C\), each of whose entries corresponds to a maximal contiguous sequence of equal-length edges.
  Additionally we keep a list \(M\) of such entries of \(C\) which are surrounded by longer entries.
  These can be computed once at the beginning of the algorithm in linear time, and then maintained at each iteration.
  At each iteration a sequence \(e_m, \dots, e_{m + k - 1}\) is obtained by taking the first entry from \(M\) and removing it from both \(M\) and \(C\).
  When the new face \(f^\prime\) is computed,
  we add any new edges to \(C\) and check whether any of the newly adjacent pairs of entries have equal length;
  if so we consolidate them into a single entry of \(C\).
  We also check whether any of the newly adjacent entries have become surrounded by strictly longer entries; if so we add them to \(M\).
  These checks take constant time in each iteration since at most two new pairs of adjacent entries can be created.
  So computing \(C_f\) takes linear time overall.
\end{proof}

\section{Compatibility Constraints}\label{sec:vertex constraints}

Next, we describe the constraints needed to ensure that the crease assignments are compatible between faces. The angles around each vertex must sum to 360\degree; this means exactly one of these angles is a mountain fold, as shown in Figure~\ref{fig:vertex-constraint}. So for each vertex \(v\) of the graph, we generate a \textcolor{blue}{blue} constraint \(C_v\):
\begin{equation}
  \sum_{a \in A_v} x_a\ \textcolor{blue}{=}\ 1,
\end{equation}
where \(A_v\) is the set of angles incident to \(v\).

\begin{figure}
  \centering
  \includegraphics[height=12cm]{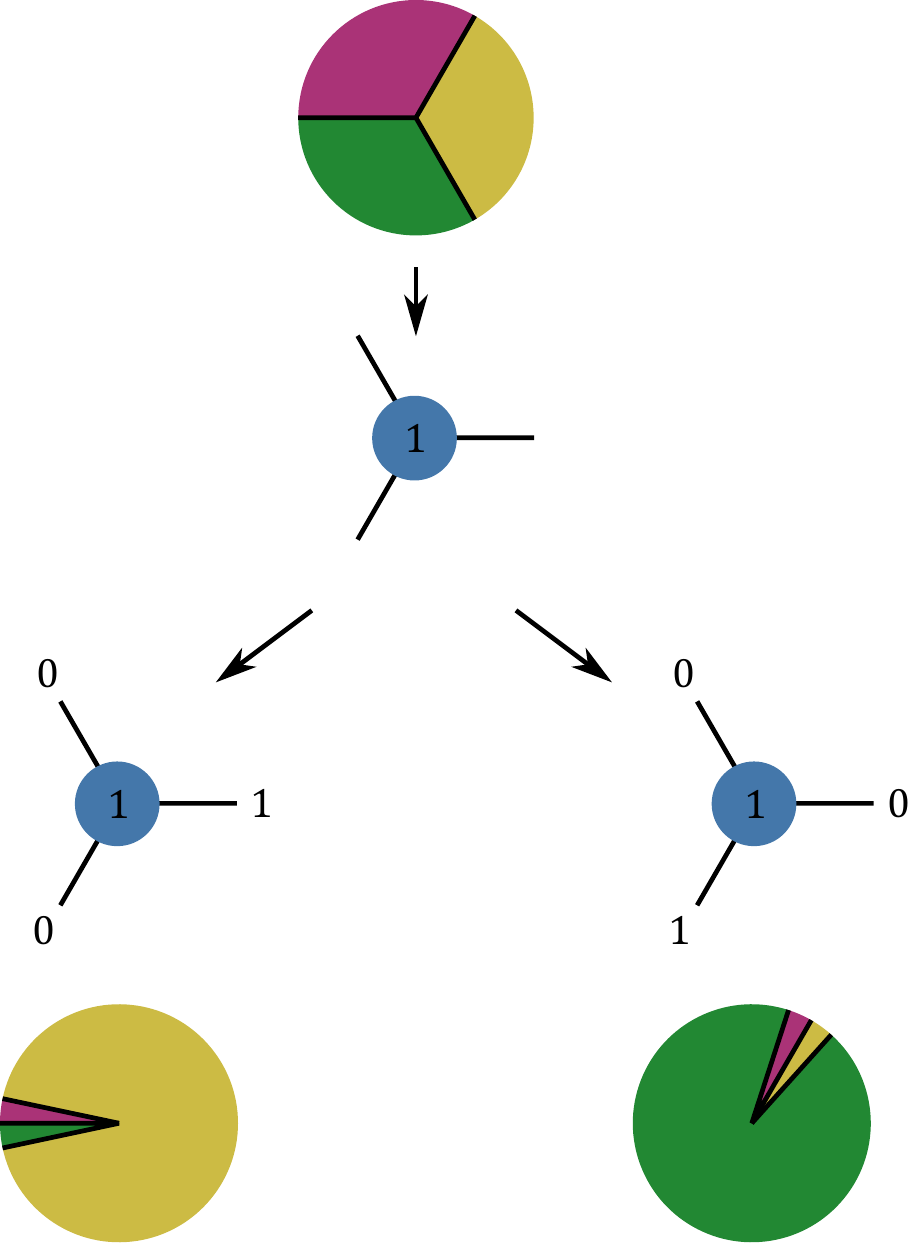}
  \caption{A vertex folds flat under a given crease assignment if and only if exactly one if the incident angles is a mountain fold.}
  \label{fig:vertex-constraint}
\end{figure}

\begin{theorem}\label{thm:constraints correct}
  A connected combinatorially embedded planar multigraph with prescribed edge lengths has a flat folding with no flat angles if and only if the constraint satisfaction problem consisting of
  \begin{itemize}
    \item for each face \(f\), the constraints \(C_f\) described in Section~\ref{sec:face constraints}, and
    \item for each vertex \(v\), the constraint \(C_v\) described above
  \end{itemize}
  is satisfiable. Moreover, these constraints can be computed in time linear in the number of angles in the graph.
\end{theorem}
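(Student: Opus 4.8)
The plan is to obtain the statement as a direct combination of Theorem~\ref{thm:fullgraph} and Theorem~\ref{thm:face constraints correct}, with the vertex clauses $C_v$ serving as a verbatim encoding of compatibility. The one structural fact underpinning everything is the variable-sharing pattern: each angle $a$ of $G$ lies on exactly one face $f$ and is incident to exactly one vertex $v$, so $x_a$ appears in exactly one face constraint set $C_f$ and in exactly one vertex constraint $C_v$; and the auxiliary variables $y,z$ created while building a given $C_f$ (in the even-$k$ case) are fresh, hence private to that face. Throughout, ``flat foldability of a face'' means flat foldability of its associated simple cycle, per the convention established at the start of Section~\ref{sec:face constraints}.

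($\Rightarrow$) Suppose $G$ has a flat folding with no flat angles. Every angle then has measure $0^\circ$ (valley) or $360^\circ$ (mountain); set $x_a$ to $0$ or $1$ accordingly. Restricting the global folding to (the simple cycle associated with) any face $f$ is itself a flat folding, so the induced mountain/valley assignment on $f$ is flat foldable, and by Theorem~\ref{thm:face constraints correct} the values $\{x_a : a \in f\}$ extend to a satisfying assignment of $C_f$. Because distinct faces use disjoint auxiliary variables, all these extensions can be taken simultaneously. Finally, the measures around each vertex $v$ sum to $360^\circ$ and each is $0^\circ$ or $360^\circ$, so exactly one angle in $A_v$ is a mountain; hence $C_v$ holds. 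The combined CSP is therefore satisfiable.

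($\Leftarrow$) Conversely, fix a satisfying assignment of the whole CSP. Its restriction to the variables of each $C_f$ is satisfying, so by Theorem~\ref{thm:face constraints correct} the induced mountain/valley assignment on the angles of $f$ is flat foldable when $f$ is treated as a simple cycle. Its restriction to each $C_v$ forces exactly one incident angle to be a mountain and the rest valleys, so the measures around every vertex sum to $360^\circ$; thus the assignment is compatible in the sense of Theorem~\ref{thm:fullgraph}. Applying that theorem produces a flat folding of $G$ realizing these measures, and since no measure equals $180^\circ$ the folding has no flat angles.

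For the running time, Theorem~\ref{thm:face constraints linear} gives that $C_f$ is computed in time linear in the number of angles of $f$, while $C_v$ is built in time $O(|A_v|)$. Since each angle belongs to exactly one face and is incident to exactly one vertex, $\sum_f(\text{number of angles of }f)$ and $\sum_v |A_v|$ both equal the total number of angles, so the whole construction is linear. The only points needing care are the gluing step --- checking that the shared $x_a$ variables meet in precisely one face clause and one vertex clause and that the recursively generated $y,z$ variables never leak between faces --- and observing that $C_v$ restates compatibility exactly (not merely as an implication), which holds precisely because flat angles are forbidden.
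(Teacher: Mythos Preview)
Your proof is correct and follows essentially the same route as the paper's own argument: both directions are obtained by combining Theorem~\ref{thm:face constraints correct} (per-face correctness) with Theorem~\ref{thm:fullgraph} (compatibility implies global foldability), with $C_v$ encoding the $360^\circ$ vertex condition, and the linear-time bound follows from disjointness of the angle sets over faces and over vertices. Your explicit remarks about auxiliary variables being private to each face and about $C_v$ restating compatibility exactly are the same observations the paper uses, just stated a bit more carefully.
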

\begin{proof}
  Suppose the graph has such a flat folding, and assign variables representing angles in the graph based on whether the corresponding angle is a mountain or a valley fold in the flat-folded state; this is only a partial assignment since some variables do not correspond to angles of the original graph.  Each face (and thus its corresponding simple cycle) is folded flat, and the variables which are not yet assigned are disjoint between faces, so by Theorem~\ref{thm:face constraints correct} we can extend the assignment to an assignment of all variables which satisfies \(C_f\) for every face \(f\). The assignment also satisfies \(C_v\) since exactly one angle incident to \(v\) has measure \(360\degree\) in the folded state.

  Conversely, suppose there is a satisfying assignment. Then assign each angle to be mountain or valley based on the value of the corresponding variable. By Theorem~\ref{thm:face constraints correct}, this gives a flat-foldable crease assignment for each face. These crease assignments are compatible because the variable assignments satisfy each \(C_v\), so by Theorem~\ref{thm:fullgraph} there is a flat folding with these angle assignments.

  Finally, we show that the set of constraints can be computed in linear time.
  By Theorem~\ref{thm:face constraints linear} each set of face constraints \(C_f\) can be computed in time linear in the number of angles incident to \(f\).  Since the sets of angles incident to different faces are disjoint, it takes linear time overall to compute the face constraints.
  Similarly, computing each vertex constraint \(C_v\) takes time linear in the number of angles incident to \(v\), and these are all disjoint from each other as well.
  So the set of constraints can be computed in time linear in the number of angles of the graph.
\end{proof}

\section{Solving the Constraint Satisfaction Problem}\label{sec:solving csp}

What remains is solving the constraint satisfaction problem consisting of \(C_f\) and \(C_v\) for each face and vertex of the graph.
Inspecting the constraints reveals that they are an instance of planar bipartite positive \textsc{$*$-in-$*$SAT-E2}:
\begin{itemize}
   \item Each constraint has the form $\sum\limits_{x\in S} x=c$ for some set $S$ of variables and constant $c$; this is a clause saying exactly $c$ variables in $S$ are true.
   \item The \textcolor{red}{red} and \textcolor{blue}{blue} clauses provide the bipartition.  Each variable is in exactly one \textcolor{red}{red} clause and exactly one \textcolor{blue}{blue} clause.  For each angle \(a\) incident to a face \(f\) and a vertex \(v\), the variable \(x_a\) appears in one \textcolor{red}{red} clause belonging to \(C_f\) and one \textcolor{blue}{blue} clause \(C_v\).  All other variables satisfy this condition because the subgraph corresponding to each \(C_f\) is bipartite according to Theorem~\ref{thm:face constraints planar}.
   \item The graph corresponding to the constraint satisfaction problem is planar.  We can place each clause \(C_v\) at the corresponding vertex \(v\). Then for each face \(f\) we can place the graph corresponding to \(C_f\) inside \(f\); by Theorem~\ref{thm:face constraints planar} this can be done without violating planarity.  An example of the planar embedding constructed for the entire constraint satisfaction problem is shown in Figure~\ref{fig:constraint-graph}.
\end{itemize}

All that remains to be shown is that planar bipartite positive \textsc{$*$-in-$*$SAT-E2} can be solved efficiently. We now describe a fairly standard reduction to a max-flow problem, which can be solved in near-linear time.

\begin{figure}
	\centering
  \includegraphics[height=16cm]{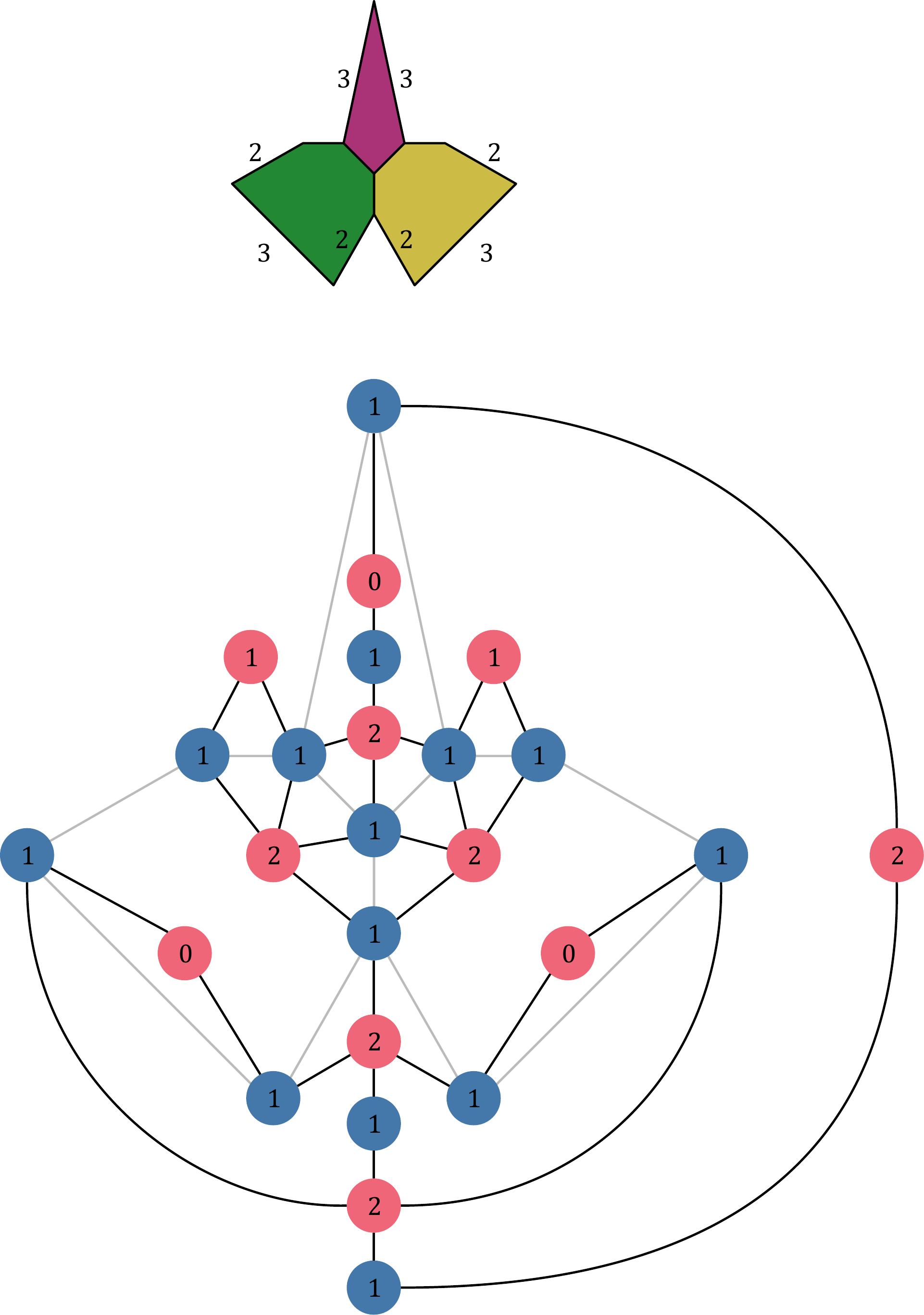}
	\caption{Top: An example graph with assigned edge lengths.  Unlabeled edges have length 1.  Bottom: The resulting instance of planar bipartite positive \textsc{$*$-in-$*$SAT-E2} (overlaid on the original graph in gray).  Since this instance is unsatisfiable, the original graph cannot be folded flat.}
	\label{fig:constraint-graph}
\end{figure}

\begin{theorem}\label{thm:solve csp}
  Planar bipartite positive \textsc{$*$-in-$*$SAT-E2} can be solved in \(O(n\log^3 n)\) time, where \(n\) is the number of clauses.
\end{theorem}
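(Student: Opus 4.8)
The plan is to recognize the problem as a bipartite degree-constrained subgraph problem and reduce it to multiple-source multiple-sink maximum flow in a directed planar graph, for which a near-linear-time algorithm is known.

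First I would set up the correspondence. Let \(G\) be the planar bipartite graph of the given instance: its vertices are the clauses, partitioned into a set \(R\) of red clauses and a set \(B\) of blue clauses, and its edges are the variables, each joining one red clause to one blue clause; write \(c_u\) for the target of clause \(u\) (the required number of true variables). A satisfying assignment is exactly a subset \(F\) of the edges with \(\deg_F(u)=c_u\) for every clause \(u\). As a necessary first check, compute \(\sum_{r\in R}c_r\) and \(\sum_{b\in B}c_b\): each equals the number of true variables in any satisfying assignment (a true variable is counted once from its red endpoint and once from its blue endpoint), so if they differ we report unsatisfiable; otherwise call the common value \(N\). Crucially, it is bipartiteness that lets us turn the ``exactly \(c\)'' constraints into flow-conservation constraints (orienting all edges red\(\to\)blue) rather than a general, non-bipartite \(b\)-matching problem.

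Next, build a flow network \(G'\). Orient every variable edge from its red endpoint to its blue endpoint with capacity \(1\). For each red clause \(r\) attach a new pendant vertex \(r'\) with an arc \(r'\to r\) of capacity \(c_r\), and for each blue clause \(b\) attach a new pendant vertex \(b'\) with an arc \(b\to b'\) of capacity \(c_b\); take \(\{r':r\in R\}\) as the source set and \(\{b':b\in B\}\) as the sink set. Attaching degree-one vertices to a planar graph (each placed inside a face incident to its neighbor) preserves planarity, so \(G'\) is a directed planar (multi)graph with integer capacities and \(O(n)\) vertices and edges, where \(n\) is the number of clauses (the instance graph is planar, so the number of variables is \(O(n)\)). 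A routine argument shows the maximum flow value of \(G'\) equals \(N\) if and only if the instance is satisfiable: the cut placing only the source pendants on the source side, and dually the cut placing only the sink pendants on the sink side, each have capacity \(N\), so the flow is at most \(N\); a satisfying set \(F\) yields a flow of value \(N\) by pushing \(c_r\) along each \(r'\to r\), one unit along each edge of \(F\), and \(c_b\) along each \(b\to b'\); conversely an integral maximum flow of value \(N\) must saturate every source and sink arc, and its restriction to the variable edges takes values in \(\{0,1\}\) and selects a set \(F\) with \(\deg_F(u)=c_u\) at every clause \(u\), i.e.\ a satisfying assignment, which can then be read off directly.

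Finally, I would invoke the near-linear-time algorithm for multiple-source multiple-sink maximum flow in directed planar graphs (Borradaile--Klein--Mozes--Nussbaum--Wulff-Nilsen), which runs in \(O(n\log^3 n)\) time on an \(n\)-vertex network; applied to \(G'\) this decides satisfiability (and recovers an assignment) in \(O(n\log^3 n)\) time. The main thing to get right, and the step I expect to need the most care, is confirming that our constructed instance genuinely meets the hypotheses of that black-box algorithm (arbitrary, possibly widely scattered source and sink sets in a directed planar multigraph with integer capacities) and that the pendant-vertex gadget simultaneously preserves planarity and faithfully encodes the ``exactly \(c_r\)'' / ``exactly \(c_b\)'' requirements; the rest is the standard flow/degree-constrained-subgraph dictionary together with integrality of maximum flow.
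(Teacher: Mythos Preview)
Your proposal is correct and essentially identical to the paper's proof: both reduce to multiple-source multiple-sink max-flow in a planar graph by orienting variable edges red$\to$blue with unit capacity, attaching a pendant source of capacity $c_r$ to each red clause and a pendant sink of capacity $c_b$ to each blue clause, checking that the two target sums agree, and then invoking the $O(n\log^3 n)$ planar multi-source multi-sink max-flow algorithm to test whether the maximum flow attains that common sum. Your write-up is in fact slightly more careful than the paper's in two places---you explicitly note that planarity bounds the number of variable edges by $O(n)$, and you spell out both directions of the flow/assignment correspondence---but the underlying construction and argument are the same.
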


\begin{proof}
We use the graph with clauses as vertices and variables as edges, as described earlier and shown in Figure~\ref{fig:constraint-graph}.  For each \textcolor{red}{red} clause \(r\) which expects \(\ell_r\) true variables, we add a new source vertex and an edge from the source vertex to \(r\) with capacity \(\ell_r\).  Similarly, for every \textcolor{blue}{blue} clause \(b\) expecting \(\ell_b\) true variables, we add a new sink vertex and an edge from \(b\) to the sink vertex with capacity \(\ell_b\).  Finally, we assign a capacity of 1 to each edge corresponding to a variable, which goes from a \textcolor{red}{red} clause to a \textcolor{blue}{blue} clause.  This gives us an instance of multi-source multi-sink planar max-flow, for which the maximum possible flow can be determined in time \(O(k \log^3 k)\) \cite{planar-flow} where \(k\) is the number of vertices in the flow graph.  Since the flow graph has exactly twice as many vertices as there were clauses, the maximum flow can be determined in time \(O(n\log^3 n)\).

We will assume that
\[T:=\sum\limits_{\text{red }r}\ell_r=\sum\limits_{\text{blue }b}\ell_b,\]
since this is clearly required for the constraint problem to be satisfiable.

To solve the constraint satisfaction problem, we ask if the maximum flow has value $T$; this is clearly an upper bound an the maximum flow.

An integer flow uses some set of edges corresponding to variables, which specifies an assignment. The flow constraint on the edges to the appropriate source or sink forces the flow to use at most $\ell_c$ variables in clause $c$, and in order to reach the target flow $T$ we must use exactly this many variables in each clause. Thus the desired flow exists if and only if the instance of planar bipartite positive \textsc{$*$-in-$*$SAT-E2} is solvable.
\end{proof}

\section{Putting Things Together}
\label{sec:putting-together}

Combining Theorem~\ref{thm:constraints correct} and Theorem~\ref{thm:solve csp} immediately gives our main result:

\begin{corollary}
  We can determine whether a connected combinatorially embedded planar multigraph with prescribed edge lengths and exterior face has a flat folding with no flat angles in \(O(n\log^3 n)\) time, where \(n\) is the number of angles in the graph.
\end{corollary}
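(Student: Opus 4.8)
\section*{Proof proposal}

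The plan is to chain together the two main technical results. First I would invoke Theorem~\ref{thm:constraints correct}: in time linear in the number of angles of the graph, it constructs a constraint satisfaction instance---the face constraints $C_f$ together with the vertex constraints $C_v$---that is satisfiable if and only if the graph has a flat folding with no flat angles. (I would also recall the preliminary sanity check from Section~\ref{sec:face constraints}: for each face we verify that it has an even number of edges and that the alternating sum of its edge lengths vanishes, rejecting immediately otherwise; this too is linear time.) So correctness of the overall decision procedure reduces entirely to correctly deciding satisfiability of this instance.

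Next I would argue that the instance is of the special form handled by Theorem~\ref{thm:solve csp}, namely planar bipartite positive \textsc{$*$-in-$*$SAT-E2}. This is exactly the content of the bulleted discussion opening Section~\ref{sec:solving csp}: each clause has the form $\sum_{x\in S} x = c$; the red/blue coloring (red from $C_f$, blue from $C_v$ and from the angle vertices and the even-case gadgets) gives the bipartition, with every variable in exactly one red and one blue clause by Theorem~\ref{thm:face constraints planar}; and the graph is planar because each $C_v$ can be drawn at vertex $v$ and each $C_f$ drawn inside face $f$ using the planar embedding guaranteed by Theorem~\ref{thm:face constraints planar}. I would then note the size bound: by Theorem~\ref{thm:face constraints planar} each $C_f$ contributes $O(|f|)$ clauses and variables, the $C_v$ contribute $O(\deg v)$ each, and $\sum_f |f| = \sum_v \deg v = O(n)$, so the number of clauses is $O(n)$. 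Therefore applying Theorem~\ref{thm:solve csp} with its parameter equal to $\Theta(n)$ solves the instance in $O(n\log^3 n)$ time.

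Finally I would assemble the running time: linear time to build the constraints, plus $O(n\log^3 n)$ to solve them, is $O(n\log^3 n)$ overall; and by the equivalence in Theorem~\ref{thm:constraints correct} the answer to the CSP is exactly the answer to the flat-folding question. I do not anticipate a genuine obstacle here---the corollary is a packaging step---but the one point requiring a sentence of care is confirming that the CSP size is linear in the \emph{same} $n$ (the number of angles of the input graph), so that the bound in Theorem~\ref{thm:solve csp}, stated in terms of the number of clauses, translates to $O(n\log^3 n)$ without hidden blowup from the auxiliary variables and clauses introduced in the even case of the face algorithm; this is handled by the $\le 2n$ vertex bound in Theorem~\ref{thm:face constraints planar}.
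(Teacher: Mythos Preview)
Your proposal is correct and follows essentially the same approach as the paper's own proof, which is a brief packaging step: compute the constraint instance in linear time (hence linearly many clauses) and solve it via Theorem~\ref{thm:solve csp} in $O(n\log^3 n)$. You have simply spelled out in more detail the two points the paper leaves implicit---that the instance is of the required planar bipartite positive \textsc{$*$-in-$*$SAT-E2} form (already established at the start of Section~\ref{sec:solving csp}) and that the clause count is $O(n)$ via the $\le 2n$ bound of Theorem~\ref{thm:face constraints planar}.
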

\begin{proof}
  The constraint problem instance can be computed in linear time, and so it has linearly many clauses, which can thus be solved in time \(O(n\log^3 n)\).
\end{proof}

\label{sec:extensions}
This result can be extended in three ways, described next.

\subsection{Extension to Specified Flat Angles}

First, we can allow flat (180\degree) angles in the folded graph, provided the input specifies \emph{which} angles are flat, leaving the remaining angles free to be mountain or valley (but not flat).

To accomplish this, first observe that for there to be a flat folding, each vertex must have exactly zero or two flat angles. We do not create variables for flat angles, since their angle is already known. Within a face that contains a flat angle, we treat the two edges around the flat angle as a single longer edge. At a vertex \(v\) which has two flat angles, we need all other angles to be valley, so the constraint \(C_v\) is now
\begin{equation}
  \sum_{a \in A_v} x_a\ \textcolor{blue}{=}\ 0,
\end{equation}
where \(A_v\) includes only non-flat angles at \(v\). The rest of the algorithm is as before.

\subsection{Extension to Disconnected Graphs}
\label{sec:disconnected}

Second, we can account for the case where the graph is disconnected.  Here we assume that the connected components are arranged in a rooted forest (i.e., a collection of rooted trees), where each non-root component specifies which interior face of its parent it is to reside in.  This condition can arise from folding an arbitrary single-vertex complex, where some faces share the central vertex but no edges; then the structure of the complex requires a certain arrangement of components within faces.  We first check that each connected component is foldable.  If this is the case, then the only obstacle to foldability is being able to fit the folded state of each child graph \(G_i\) inside the designated face \(p_i\) of its parent.

We define the \defn{folded diameter} of a graph or face to be the maximum distance between any pair of vertices in its folded state.  Since the locations of all flat angles are specified, the relative vertex coordinates are determined and can be computed in linear time (as described in Section~\ref{closure}), so this value can be computed easily without knowledge of the folding.  It turns out that we can fit \(G_i\) inside \(p_i\) in a folding if and only if the folded diameter of \(G_i\) is at most the folded diameter of \(p_i\).  To show this, we can imagine applying cases C and D of Lemma~\ref{lem:cycle} to \(p_i\) repeatedly until all the edge lengths are equal.  Because the face transformations in those cases preserve folded diameter, it follows that the remaining edges all have length equal to the folded diameter of \(p_i\).  Thus, if the folded diameter of \(G_i\) is less than or equal to the length of one of these edges, we can place a folding of \(G_i\) along it in the folded state.  On the other hand, if the folded diameter of \(G_i\) is greater than the folded diameter of \(p_i\), then we can clearly never fit a folding of \(G_i\) inside a folding of \(p_i\).

\subsection{Finding an Exterior Face}

Third, instead of assuming that the exterior face is given, we can determine
in linear time a face that is a suitable exterior face if any face is.
Observe that the exterior face must be \defn{full-diameter}
in the sense that its folded diameter
(defined in Section~\ref{sec:disconnected} above)
equals the folded diameter of the entire graph,
because some vertex of the minimum (and maximum) coordinate
must be on the exterior face in any flat folding.
We claim that \emph{every} full-diameter face
is an equally suitable exterior face:
if there is a flat folding with any one full-diameter face as exterior face,
then there is a flat folding with any desired full-diameter face
as exterior face.
Thus, to determine whether the graph is flat foldable,
we can simply find any full-diameter face and specify it as the exterior face.

\begin{figure}
  \centering
  \includegraphics[scale=0.8]{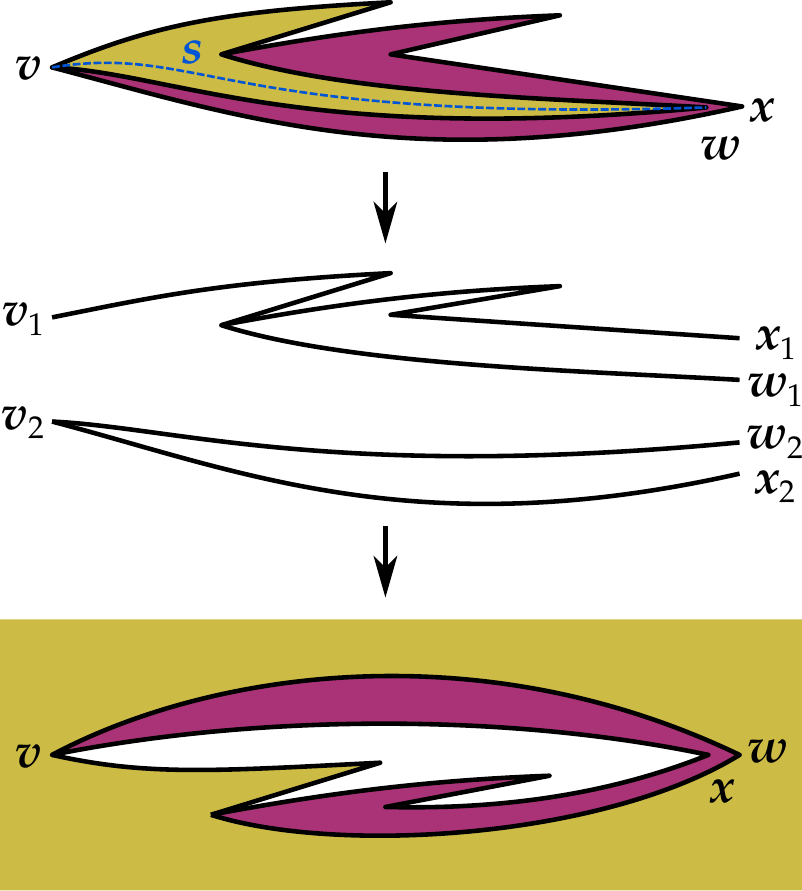}
  \caption{Cutting a flat folding apart and reassembling it with a different exterior face.}
  \label{fig:exterior rearrangement}
\end{figure}

To prove the claim, consider a flat folding of the graph,
say with exterior face~$e$.
Take any non-exterior full-diameter face~$f$,
with diameter realized by vertices $v$ and $w$.
Face $f$ consists of two folded paths connecting $v$ and~$w$.
By the argument in Section~\ref{sec:disconnected} above,
in any folding of $f$ resulting from Lemma~\ref{lem:cycle},
we can select $v$ and $w$ such that the two folded paths are separable:
we can draw a straight line segment $s$ from $v$ to $w$
that is layered in between the two folded paths.
Because $s$ is full diameter, it partitions the edges of the graph
into two halves $H_1$ and~$H_2$, where $H_1$ is entirely before $H_2$ in the layer order.
Some vertices (including $v$ and~$w$) have some incident edges
in $H_1$ and other incident edges in $H_2$.
We can imagine splitting each such vertex $x$ into two vertices $x_1$ and~$x_2$,
where $x_i$ is incident to the edges that lie within~$H_i$,
so that $H_1$ and $H_2$ become disconnected from each other.
We then swap the layer order of the two halves, placing $H_2$ before $H_1$,
and for each split vertex $x$,
reconnect the two halves $x_2$ and $x_1$,
which corresponds to a cyclic shift of the edges incident to~$x$.
This process is illustrated in Figure~\ref{fig:exterior rearrangement}.
Intuitively, we can view the folding as lying on an American football
(prolate spheroid),
where the two poles represent the minimum and maximum vertex coordinates;
then this transformation corresponds to spinning the line along which we cut
this football open to define the extremes in the other dimension
(layer order).
Thus we still obtain a flat folding of the graph,
but now $f$ is the exterior face.

\subsection{Finale}

Putting these extensions together, we have the following more general result:

\begin{corollary}
  Given a combinatorially embedded planar multigraph with prescribed edge lengths and some angles specified as flat, we can determine in \(O(n\log^3 n)\) time whether there is a flat folding that has precisely the specified angles flat.
\end{corollary}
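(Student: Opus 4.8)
The plan is to obtain the corollary by running the basic algorithm of Section~\ref{sec:putting-together} on a suitably preprocessed instance, where the preprocessing consists of the three extensions applied as successive reductions to the clean case of a connected graph with no flat angles and a known exterior face.

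First I would eliminate the specified flat angles by the linear-time rewriting of Section~\ref{sec:extensions}. Reject the instance immediately unless every vertex carries exactly zero or two flat angles. Then, inside each face, replace the two edges flanking a flat angle by a single edge of their combined length; and at each vertex $v$ with two flat angles, replace $C_v$ by the constraint $\sum_{a\in A_v}x_a=0$ ranging over the non-flat angles at~$v$. By the argument of that subsection, flat foldings with \emph{precisely} the prescribed flat angles are in bijection with satisfying assignments of the rewritten constraint system, and the rewriting preserves the structural properties (planar bipartite positive \textsc{$*$-in-$*$SAT-E2}) needed by Theorem~\ref{thm:solve csp}.

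Next I would handle disconnectedness and the unknown exterior face together. The combinatorial embedding already encodes the rooted-forest nesting of components used in Section~\ref{sec:disconnected}: each non-root component is assigned the interior face $p_i$ of its parent in which it lies. For each component I compute, in linear time, the vertex coordinates forced by a depth-first search (Section~\ref{closure}) and from these the folded diameter of the component and of each of its faces; I then designate any full-diameter face of the component as its exterior face. This is sound because the exterior face of any flat folding is necessarily full-diameter (a vertex of extreme coordinate lies on it), and, by the cut-and-reassemble argument of Section~\ref{sec:extensions}, whenever some choice of full-diameter exterior face admits a flat folding so does every other choice. I then run the algorithm behind Theorem~\ref{thm:constraints correct} and Theorem~\ref{thm:solve csp} on each component separately to decide its flat foldability, and for each non-root component $G_i$ I verify the containment inequality of Section~\ref{sec:disconnected}: the folded state of $G_i$ fits inside $p_i$ iff the folded diameter of $G_i$ is at most the folded diameter of~$p_i$. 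The whole graph folds flat with exactly the prescribed flat angles iff every component folds flat and every such inequality holds.

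For the running time, every step other than solving the per-component constraint systems is linear: the flat-angle rewriting, the forced-coordinate search, the folded-diameter computations, the construction of each component's \textsc{$*$-in-$*$SAT-E2} instance, and the containment checks. Solving the instance for a component with $n_i$ angles costs $O(n_i\log^3 n_i)$ by Theorem~\ref{thm:solve csp}, and since the components partition the $n$ angles, $\sum_i n_i=n$ and the total is $O(n\log^3 n)$. The step I expect to require the most care is the interaction between the freedom to pick the exterior face of a component and the requirement that the component's folded state nest inside its parent's face: one must check that choosing a full-diameter exterior face never obstructs this nesting---which it does not, because the containment criterion refers only to folded diameters, and folded diameter is invariant under the cut-and-reassemble operation and hence independent of which full-diameter face is taken as exterior.
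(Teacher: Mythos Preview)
Your proposal is correct and follows exactly the approach the paper intends: the paper's ``proof'' of this corollary is simply the sentence ``Putting these extensions together, we have the following more general result,'' and you have spelled out in detail how the three extensions of Section~\ref{sec:extensions} compose. Your explicit check that the choice of a full-diameter exterior face does not interfere with the nesting criterion (because the criterion depends only on folded diameters, which are fixed by the vertex coordinates) is a point the paper leaves implicit, so your write-up is if anything more complete than the paper's own treatment.
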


On the other hand, if the set of flat angles is not specified, it is NP-complete to determine whether there is a flat folding \cite{foldeq}, so this implies that the hard part is deciding which angles should be flat.

\iffull
\section*{Acknowledgments}
\ourAcknowledgments
\fi

\iffull
  \bibliographystyle{alpha}
\else
  \bibliographystyle{plainurl}
\fi
\bibliography{paper}

\end{document}